\documentclass[11pt]{article}
\usepackage{mathrsfs}
\usepackage{CJK}
\usepackage{bbm}
\usepackage{graphicx}
\usepackage{amsmath,amssymb}
\usepackage{amsfonts}
\usepackage{amsthm}
\usepackage{verbatim}
\usepackage{color}

\topmargin -1.0cm \textwidth 16.5cm \oddsidemargin -0.1cm
\textheight 21cm

\usepackage{float}

\usepackage[numbers,sort&compress]{natbib}

\newtheorem{Definition}{Definition}[section]
\newtheorem{Theorem}{Theorem}[section]
\newtheorem{Remark}{Remark}[section]
\newtheorem{Lemma}{Lemma}[section]

\newtheorem{Proposition}{Proposition}[section]

\numberwithin{equation}{section}

\begin{document}
\title{{\LARGE \textbf{Recovery of signals under the condition on RIC and ROC via prior support information }}}
\author{Wengu Chen$^{1}$ ,\ \ Yaling Li$^{2}$\\[5pt]
$^{1}$ Institute of Applied Physics and Computational Mathematics\\ Beijing, 100088, China\\
$^{2}$ Graduate School, China Academy of Engineering Physics\\ Beijing, 100088, China\\[5pt]
Email: chenwg@iapcm.ac.cn, leeyaling@126.com} \maketitle

\begin{bfseries}
Abstract
\end{bfseries}
In this paper, the sufficient condition in terms of the RIC and ROC
for the stable and  robust recovery of signals in both noiseless and noisy settings
was established
via weighted $l_{1}$
minimization when there is partial prior information on support of signals.
An improved performance guarantee has been derived.
 We can obtain a less restricted sufficient condition for
signal reconstruction and a tighter recovery error bound  under some conditions
via weighted $l_{1}$ minimization.
When  prior support
estimate is at least $50\%$ accurate, the sufficient condition is weaker than the analogous
 condition by standard $l_{1}$ minimization method,
meanwhile the reconstruction error upper bound is provably to be smaller under additional conditions.
Furthermore, the sufficient condition is also proved sharp.

\begin{bfseries}
Keywords
\end{bfseries}
Compressed sensing, Restricted isometry property, Restricted orthogonality constant, Weighted $l_{1}$
minimization, Sparse signal recovery

\begin{bfseries}
Mathematics Subject Classification (2010)
\end{bfseries}
94A12, 90C59, 94A15
\section{Introduction}
\label{intro} \ \ \ \ {Compressed}  sensing shows
that it is highly possible to reconstruct sparse signals  from what was previously
believed to be incomplete information\cite{CRT1,D}. The fundamental goal in
compressed sensing is to recover a high dimensional sparse signal based
on a small number of linear measurements, possibly corrupted by noise.
This can be compactly described via
\begin{align}\label{m1}
y=Ax+z,
\end{align}
where $A$ is a given $ n\times N$ sensing matrix with $n\ll N$,
i.e., using very few measurements, $y\in \mathbb{R}^{n}$ is a vector
of measurements, and $z \in \mathbb{R}^{n}$ is the measurement error
($z=0$ means no noise). One needs to reconstruct the unknown signal
$x\in\mathbb{R}^{N}$ based on $A$ and $y$. In general, the solutions
to the underdetermined systems of linear equations (\ref{m1}) are
not unique. In order to recover $x$ uniquely, additional assumptions
on $A$ such as restricted isometry property and $x$ such as sparsity
are needed. A vector $x\in\mathbb{R}^{N}$ is $k-$sparse if
$\|x\|_{0}=|supp(x)|\leq k$, where $supp(x)=\{i: x_{i}\neq 0\}$ is
the support of $x$. Then the most natural approach for solving this
problem is to find the sparsest solution in the feasible set of
possible solutions. In the noiseless case, it can be cast as the
$l_{0}$ minimization problem as below \cite{CRT1,D,LW,X}:
\begin{align}\label{f0}
  \underset{x\in \mathbb{R}^{N}}{\rm minimize}\quad \|x\|_{0} \quad {\rm subject\quad to} \quad Ax=y.
\end{align}
It was proved that when measurements $n>2k$ and $A$ is in general position
(any collection of $n$ columns of $A$ is linearly independent),
then any $k-$sparse signals can be exactly recovered \cite{DE}.
However, $l_{0}$ minimization problem is a combinatorial problem which
becomes intractable in the high dimensional settings.
Hence, solving it directly is NP-hard.



Cand\`es and Tao
\cite{CT} then proposed the following constrained $l_{1}$ minimization
method:
\begin{align}\label{f1}
  \underset{x\in \mathbb{R}^{N}}{\rm minimize}\quad\|x\|_{1} \ \ \ {\rm subject\quad to}\ \ \
  \|y-Ax\|_2\leq\epsilon.
\end{align}
It can be viewed as a convex relaxation of $l_{0}$ minimization.
To recover sparse signals via constrained $l_{1}$ minimization,
Cand\`es and Tao \cite{CT} also introduced the notion of Restricted
Isometry Property (RIP), which is one of the most commonly used
frameworks for compressive sensing. The definition of RIP is as
follows.
\begin{Definition}
Let $A\in \mathbb{R}^{n\times N}$ be a matrix and $1\leq k \leq N$
is an integer. The restricted isometry constant (RIC) $\delta_{k}$
of order $k$ is defined as the smallest nonnegative
 constant that
satisfies
$$(1-\delta_{k})\|x\|_{2}^{2}\leq\|Ax\|_{2}^{2}\leq(1+\delta_{k})\|x\|_{2}^{2},$$
for all $k-$sparse vectors $x\in\mathbb{R}^{N}.$ Note that for $k_{1}\leq k_{2}$,
$\delta_{k_{1}}\leq \delta_{k_{2}}$.
\end{Definition}

 Thus, $l_{1}$ minimization has been proved an effective way to recover sparse signals
 in many settings \cite{CZ2,CWX1,CWX,CZ1,ML,CT,CZ,CRT,CXZ}. Cand\`es, Romberg and Tao first gained the
 sufficient condition for stable recovery by $l_{1}$ minimization method \cite {CRT}.
In \cite{CZ}, Cai and Zhang applied the following $l_{1}$ minimization
 \begin{align}\label{f4}
  \underset{x\in \mathbb{R}^{N}}{\rm minimize}\quad\|x\|_{1} \ \ \ {\rm subject\quad to}\ \ \
  \|y-Ax\|_2\in \mathcal{B},
\end{align}
where $\mathcal{B}$ is a bounded set determined by the noise
structure. In particular, $\mathcal{B}$ is taken to be $ \{0\}$ in
the noiseless case. Here they considered the following $l_{2}$ bounded noise and Dantzing
 Selector noise settings
\begin{align}\label{b1}
 \mathcal{B}^{l_{2}}(\varepsilon)=\{z: \|z\|_{2}\leq\varepsilon\}
\end{align}
and
\begin{align}\label{b2}
 \mathcal{B}^{DS}(\varepsilon)=\{z:
 \|A^{T}z\|_{\infty}\leq\varepsilon\}.
\end{align}

Cai and Zhang \cite{CZ} provided a sharp sufficient condition $\delta_{tk}<\sqrt{\frac{t-1}{t}}$ with $t\geq 4/3$
which can guarantee the exact recovery of all $k-$sparse signals in the
noiseless case and stable recovery of  approximately sparse signals
in the noise case by $l_{1}$ minimization method (\ref{f4}) with (\ref{b1}) and (\ref{b2}).

In addition, the restricted orthogonality constant is also important in compressed sensing \cite{CWX,CWX1,CZ2}.
\begin{Definition}
Let $A\in \mathbb{R}^{n\times N}$ be a matrix and $1\leq k_{1}, k_{2}\leq N$ be integers
with $ k_{1}+k_{2}\leq N$, the restricted orthogonality constant (ROC) $\theta_{k_{1},k_{2}}$
of order $(k_{1},k_{2})$ is defined as the smallest nonnegative
constant that satisfies
$$|\langle Au, Av\rangle|\leq \theta_{k_{1},k_{2}}\|u\|_{2}\|v\|_{2},$$
for all $k_{1}-$sparse vectors $u\in\mathbb{R}^{N}$ and $k_{2}-$sparse vectors $v\in\mathbb{R}^{N}$
with disjoint supports.  Note that for $k_{1}\leq k_{2}$ and $k'_{1}\leq k'_{2}$,
$\theta_{k_{1},k'_{1}}\leq \theta_{k_{2},k'_{2}}$.
\end{Definition}
It also has been shown that $l_{1}$ minimization can recover a sparse signal under various conditions on $\delta_{k}$
 and $\theta_{k_{1},k_{2}}$  \cite{CXZ,CWX1,CT,CT1,CZ2,CWX,CWX2,DH,F}.
 For example, $\delta_{k}+\theta_{k,k}+\theta_{k,2k}<1$ \cite{CT},
 $\delta_{2k}+\theta_{k,2k}<1$ \cite{CT1}, $\delta_{1.5k}+\theta_{k,1.5k}<1$ \cite{CXZ} and
 $\delta_{1.25k}+\theta_{k,1.25k}<1$ \cite{CWX1}.
 Cai and Zhang \cite{CZ2} also established a sharp sufficient condition in terms of RIC and ROC to achieve the stable
and robust
 recovery of signals
 in both noiseless and noisy cases via $l_{1}$ minimization method.
In fact, Cai and Zhang  \cite{CZ2}
 proved that $\delta_{a}+C_{a,b,k}\theta_{a,b}<1$ can ensure stable
and robust recovery of signals via $l_{1}$ minimization method (\ref{f4}) with (\ref{b1}) and (\ref{b2}).
Moreover, for any $\varepsilon>0$, $\delta_{a}+C_{a,b,k}\theta_{a,b}<1+\varepsilon$
is not sufficient to guarantee the exact and stable recovery of all $k-$sparse signals via any methods.

It is worthy of noting that compressed sensing is a nonadaptive data acquisition
technique since $A$ is independent of $x$, the signal being measured.
The $l_{1}$ minimization method (\ref{f1}) is also itself
nonadaptive as a result of no prior information on the signal $x$ being used in (\ref{f4}).
In practical examples, however, the estimate of the support of the
signal or of its largest coefficients may be possible to be drawn.
 Incorporating prior information is very useful for recovering signals from compressive measurements.
Thus, the following weighted $l_{1}$ minimization method  which
incorporates partial support information of the signals has been
introduced to replace standard $l_{1}$ minimization
\begin{align}\label{f3}
 \underset{x\in \mathbb{R}^{N}}{\rm minimize}\quad\|x\|_{1,\mathrm{w}} \ \ \
 {\rm subject\quad to}\ \ \  \|y-Ax\|_2\leq\epsilon,
 \end{align}
where $\mathrm{w} \in [0, 1]^{N}$ and
$\|x\|_{1,\mathrm{w}}=\sum\limits_{i}\mathrm{w}_{i}|x_{i}|$.
Reconstructing compressively sampled signals with partially known
support has been previously studied in the literature; see
\cite{BMP,VL,LV,J,KXAH,LV1,FMSY}. Borries, Miosso and Potes in
\cite{BMP}, Khajehnejad $et~al.$ in \cite{KXAH}, and Vaswani and Lu
in \cite{VL} introduced the problem of signal recovery with
partially known support independently. The works by Borries $et~al.$
in \cite{BMP}, Vaswani and Lu in \cite{VL,LV,VL1} and Jacques in
\cite{J} incorporated known support information using weighted
$l_{1}$ minimization approach with zero weights on the known
support, namely, given a support estimate
$\widetilde{T}\subset\{1,2,\ldots, N\}$ of unknown signal $x$,
setting $\mathrm{w}_{i}=0$ whenever $i\in \widetilde{T}$ and
$\mathrm{w}_{i}=1$ otherwise, and derived sufficient recovery
conditions. Friedlander $et~al.$ in \cite{FMSY} extended weighted
$l_{1}$ minimization approach to nonzero weights.
They allow the
weights $\mathrm{w}_{i}=\omega\in [0, 1]$ if $i\in \widetilde{T}$.
Since Friedlander $et~al.$ incorporated the prior support
information and consider the accuracy of the support estimate, they
derived the stable and robust recovery guarantees for weighted
$l_{1}$ minimization which generalize the results of Cand\`es,
Romberg and Tao in \cite{CRT}.
They actually improved the recovery guarantees of $l_{1}$ minimization problem (\ref{f1})
by using weighted $l_{1}$ minimization problem (\ref{f3}).
  Friedlander $et~al.$
\cite{FMSY} pointed out that once at least $50\%$ of the support
information is accurate,
  a less conservative sufficient condition for guaranteeing stably and robustly signal reconstruction
  as well as a tighter reconstruction error bound can be obtained.
Furthermore, they also pointed out sufficient conditions are weaker
than those of \cite{VL} when $\omega=0$.



In this paper, we consider the following weighted $l_{1}$ minimization
 method:
\begin{align}\label{f2}
 \underset{x\in \mathbb{R}^{N}}{\rm minimize}\quad&\|x\|_{1,\mathrm{w}}\ \ \
 {\rm subject\quad to}
 \ \ \ y-Ax\in \mathcal{B}  \notag \\
 {\rm with }\ \ \ \mathrm{w}_{i}&=\left\{\begin{array}{cc}
                     1, & i\in \widetilde{T}^{c} \\
                     \omega, &  i\in \widetilde{T}.
                   \end{array}
                   \right.
 \end{align}
where $0\leq \omega \leq 1 $ and $\widetilde{T}\subset\{1,2,\ldots,
N\}$ is a given support estimate of unknown signal $x$. $\mathcal{B}$
is also a bounded set determined by the noise settings (\ref{b1})
and (\ref{b2}). Our goal is to generalize the results of Cai and
Zhang \cite{CZ2} via the weighted $l_{1}$ minimization method
(\ref{f2}). We establish the sufficient condition on RIC and ROC for the stable
and robust recovery of signals with partially known support
information from (\ref{m1}). We also show that the recovery by
weighted $l_{1}$ minimization method (\ref{f2}) is stable and robust
under weaker sufficient conditions compared to the standard $l_{1}$
minimization method (\ref{f4}) when we have the partial support
information with accuracy better than $50\%$. Meanwhile, we obtain
the smaller upper bounds on the reconstruction error under additional conditions.
By means of weighted $l_{1}$ minimization method (\ref{f2}),
 that is to say, the requirement on the RIC and ROC of the sensing matrix for guaranteeing
 stable and robust signal recovery can be further relaxed
 if at least $50\%$ of the support estimate is accurate; in addition,
 the reconstruction error upper bound is provably to be smaller under additional conditions.
Our result implies that the achievable performance of signal recovery via
 weighted $l_{1}$ minimization method (\ref{f2}) is actually better than the works
 by Cai and Zhang \cite{CZ2} under some conditions.

The rest of the paper is organized as follows. In Section \ref{2},
we will introduce some notations and some basic lemmas that will be
used. The main results  are given in Section \ref{3}, and the proofs
of our main results are presented in Section \ref{4}.

\section{Preliminaries}\label{2}
Let us begin with basic notations. For arbitrary
$x\in\mathbb{R}^{N}$, $x_{\max(k)}$ is defined as $x$ with all but the
largest $k$ entries in absolute value set to zero, i.e. $x_{\max(k)}$ is the
best $k-$term approximation of $x$,  and
$x_{-\max(k)}=x-x_{\max(k)}$. Let $T_{0}$ be the support of $x_{\max(k)}$,
with $T_{0}\subseteq
\{1,\ldots,N\}$ and $|T_{0}|\leq k$. Let $\widetilde{T}\subseteq
\{1, \ldots, N\}$ be the support estimate of $x$ with
$|\widetilde{T}|=\rho k$, where $\rho \geq 0$ represents the ratio
of the size of the estimated support to the size of the actual
support of $x_{\max(k)}$ (or the support of $x$ if $x$ is $k-$ sparse).
Denote $\widetilde{T}_{\alpha}=T_{0}\cap \widetilde{T}$ and
$\widetilde{T}_{\beta}=T_{0}^{c}\cap \widetilde{T}$ with
$|\widetilde{T}_{\alpha}|=\alpha |\widetilde{T}|=\alpha\rho k$ and
$|\widetilde{T}_{\beta}|=\beta |\widetilde{T}|=\beta\rho k$, where
$\alpha$ denotes the ratio of the number of indices in $T_0$ that
were accurately estimated in $\widetilde{T}$ to the size of
$\widetilde{T}$ and $\alpha+\beta=1$. For arbitrary nonnegative
number $\zeta$, we denote by $[[\zeta]]$ an integer satisfying $\zeta\leq
[[\zeta]] <\zeta+1.$ Moreover, for given set $T\subseteq
\{1,\ldots,N\}$, we denote by $x_{T}$ the vector which equals
to $x$ on $T$ and $0$ on the component $T^{c}$.

We first state three key technical tools used in the proof of the main result.
Lemma \ref{l1} was introduced by Cai and Zhang (\cite{CZ2}, Lemma 5.1)
which provides a way to estimate the inner product by the ROC when only one component
is sparse. Lemma \ref{l2} introduced by Cai and Zhang  (\cite{CZ1}, Lemma 5.3)
provides an inequality between the sum of the $\alpha$th
power of two sequences of nonnegative numbers based on the inequality of their sums.
Cai, Wang and Xu  (\cite{CWX1}, Lemma 1)
 supplied Lemma \ref{l3} that reveals the relationship between ROC's of different orders.
\begin{Lemma}[\cite{CZ2}, Lemma 5.1]\label{l1}
Let $k_{1},k_{2}\leq N$ and $\lambda\geq0$. Assume $u, v\in \mathbb{R}^{N}$
have disjoint supports and $u$ is $k_{1}-$sparse. If $\|v\|_{1}\leq\lambda k_{2}$
and $\|v\|_{\infty}\leq\lambda$, then
$$|\langle Au, Av\rangle|\leq\theta_{k_{1},k_{2}}\|u\|_{2}\cdot\lambda\sqrt{k_{2}}.$$
\end{Lemma}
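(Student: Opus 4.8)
The plan is to reduce the general vector $v$ to a convex combination of genuinely $k_{2}$-sparse vectors, so that the defining inequality of the ROC $\theta_{k_{1},k_{2}}$ can be applied term by term. The key observation is that the constraints $\|v\|_{1}\le\lambda k_{2}$ and $\|v\|_{\infty}\le\lambda$ cut out a polytope (the intersection of a scaled $\ell_{1}$-ball with a scaled $\ell_{\infty}$-ball), and every vertex of this polytope, viewed on the coordinates of $\mathrm{supp}(v)$, is a vector supported on at most $k_{2}$ coordinates with entries of modulus at most $\lambda$. Consequently $v$ can be written as $v=\sum_{i}\mu_{i}v_{i}$ with $\mu_{i}\ge 0$, $\sum_{i}\mu_{i}=1$, each $v_{i}$ being $k_{2}$-sparse, $\mathrm{supp}(v_{i})\subseteq\mathrm{supp}(v)$, and $\|v_{i}\|_{\infty}\le\lambda$. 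The first step of the proof is therefore to establish this decomposition. I would do this either by invoking a sparse-representation-of-a-polytope lemma, or by a direct greedy argument: repeatedly peel off the restriction of the current residual to its $k_{2}$ largest-modulus coordinates with a suitable coefficient, checking at each stage that the $\ell_{1}$ and $\ell_{\infty}$ bounds are preserved and that the process terminates after finitely many steps.

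Once the decomposition is in hand, the rest is routine. Since $\|v_{i}\|_{\infty}\le\lambda$ and $v_{i}$ has at most $k_{2}$ nonzero entries, we have $\|v_{i}\|_{2}\le\lambda\sqrt{k_{2}}$. Because $u$ and $v$ have disjoint supports and $\mathrm{supp}(v_{i})\subseteq\mathrm{supp}(v)$, the vectors $u$ and $v_{i}$ also have disjoint supports; since $u$ is $k_{1}$-sparse and $v_{i}$ is $k_{2}$-sparse, the definition of the ROC yields $|\langle Au,Av_{i}\rangle|\le\theta_{k_{1},k_{2}}\|u\|_{2}\|v_{i}\|_{2}\le\theta_{k_{1},k_{2}}\|u\|_{2}\cdot\lambda\sqrt{k_{2}}$ for every $i$. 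Finally, bilinearity of the inner product together with the triangle inequality and $\sum_{i}\mu_{i}=1$ gives
\[
|\langle Au,Av\rangle|=\Bigl|\sum_{i}\mu_{i}\langle Au,Av_{i}\rangle\Bigr|\le\sum_{i}\mu_{i}\,|\langle Au,Av_{i}\rangle|\le\theta_{k_{1},k_{2}}\|u\|_{2}\cdot\lambda\sqrt{k_{2}},
\]
which is exactly the claimed bound.

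The main obstacle is the first step, the polytope decomposition: it is the only non-mechanical ingredient. One must be slightly careful when $k_{2}$ is not an integer (pass to $\lfloor k_{2}\rfloor$, noting $\sqrt{\lfloor k_{2}\rfloor}\le\sqrt{k_{2}}$) and must ensure the weights $\mu_{i}$ genuinely form a convex combination rather than merely being bounded, so that no loss is incurred. Everything after the decomposition is a two-line application of the ROC definition plus convexity, so essentially all the real work lies in justifying that membership in the $\ell_{1}$/$\ell_{\infty}$ polytope forces a sparse convex representation.
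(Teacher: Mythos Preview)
The paper does not actually prove this lemma; it is quoted verbatim from Cai and Zhang \cite{CZ2} (their Lemma~5.1) as a preliminary tool, so there is no in-paper argument to compare against. That said, your approach is precisely the one used in the original reference: Cai and Zhang establish the ``sparse representation of a polytope'' --- any point in $\{v:\|v\|_1\le \lambda k_2,\ \|v\|_\infty\le\lambda\}$ is a convex combination of $k_2$-sparse vectors whose entries have modulus at most $\lambda$ --- and then apply the ROC termwise exactly as you describe. Your outline is correct and matches the source proof; the only caveat is that the polytope decomposition is itself a nontrivial lemma (you allude to this), so in a self-contained write-up you would either cite it or carry out the greedy/extreme-point argument in full.
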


\begin{Lemma}[\cite{CZ1}, Lemma 5.3]\label{l2}
Assume $m\geq k$, $a_{1}\geq a_{2}\geq\cdots\geq a_{m}\geq 0$,
$\sum\limits_{i=1}^{k}a_{i}\geq \sum\limits_{i=k+1}^{m}a_{i},$
then for all $\alpha\geq 1$,
$$\sum\limits_{j=k+1}^{m}a_{j}^{\alpha}\leq \sum\limits_{i=1}^{k}a_{i}^{\alpha}.$$
More generally, assume $a_{1}\geq a_{2}\geq\cdots\geq a_{m}\geq 0$, $\lambda\geq 0$
and $\sum\limits_{i=1}^{k}a_{i}+\lambda\geq \sum\limits_{i=k+1}^{m}a_{i},$ then for all $\alpha\geq 1$,
$$\sum\limits_{j=k+1}^{m}a_{j}^{\alpha}\leq k\Big(\sqrt[\alpha]{\frac{\sum_{i=1}^{k}a_{i}^{\alpha}}{k}}+\frac{\lambda}{k}\Big)^{\alpha}.$$
\end{Lemma}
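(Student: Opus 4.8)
The first inequality is precisely the case $\lambda=0$ of the second one, so it suffices to establish the general bound; I would nevertheless present the $\lambda=0$ reasoning first, since it already contains the whole idea without clutter. The device driving everything is to \emph{linearize} the tail sum: since $\alpha-1\ge 0$ and the sequence is nonincreasing, $a_j^{\alpha-1}\le a_{k+1}^{\alpha-1}$ for every $j\ge k+1$, whence
\[
\sum_{j=k+1}^{m}a_j^{\alpha}=\sum_{j=k+1}^{m}a_j^{\alpha-1}\,a_j\le a_{k+1}^{\alpha-1}\sum_{j=k+1}^{m}a_j .
\]
In the case $\lambda=0$ I would bound this further, using $a_{k+1}\le a_k$ and the hypothesis $\sum_{j>k}a_j\le\sum_{i\le k}a_i$, by $a_k^{\alpha-1}\sum_{i=1}^{k}a_i=\sum_{i=1}^{k}a_k^{\alpha-1}a_i$, and then finish by the elementary estimate $a_k^{\alpha-1}a_i\le a_i^{\alpha-1}a_i=a_i^{\alpha}$ valid for $i\le k$ (as $a_i\ge a_k$), summing over $i\le k$ to obtain $\sum_{j=k+1}^{m}a_j^{\alpha}\le\sum_{i=1}^{k}a_i^{\alpha}$. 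The cases $m=k$ or $a_k=0$ are vacuous.

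For the general bound, set $A:=\big(\tfrac1k\sum_{i=1}^{k}a_i^{\alpha}\big)^{1/\alpha}$. The same linearization together with the weaker hypothesis $\sum_{j>k}a_j\le\sum_{i\le k}a_i+\lambda$ gives $\sum_{j=k+1}^{m}a_j^{\alpha}\le a_{k+1}^{\alpha-1}\big(\sum_{i=1}^{k}a_i+\lambda\big)$. I would then invoke two standard facts about the top $k$ entries: monotonicity yields $k\,a_{k+1}^{\alpha}\le\sum_{i\le k}a_i^{\alpha}$, i.e. $a_{k+1}\le A$; and the power-mean inequality (this is where $\alpha\ge 1$ enters) yields $\sum_{i\le k}a_i\le kA$. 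Substituting both, $\sum_{j=k+1}^{m}a_j^{\alpha}\le A^{\alpha-1}(kA+\lambda)=kA^{\alpha}+\lambda A^{\alpha-1}$. It remains to check $kA^{\alpha}+\lambda A^{\alpha-1}\le k\big(A+\tfrac{\lambda}{k}\big)^{\alpha}$, which is exactly the supporting-tangent (Bernoulli) inequality for the convex function $t\mapsto t^{\alpha}$ on $[0,\infty)$ at the point $A$ with increment $\lambda/k$: $\big(A+\tfrac{\lambda}{k}\big)^{\alpha}\ge A^{\alpha}+\alpha A^{\alpha-1}\tfrac{\lambda}{k}\ge A^{\alpha}+A^{\alpha-1}\tfrac{\lambda}{k}$ since $\alpha\ge 1$; multiplying by $k$ yields the claimed bound $\sum_{j=k+1}^{m}a_j^{\alpha}\le k\big(\sqrt[\alpha]{\tfrac1k\sum_{i=1}^{k}a_i^{\alpha}}+\tfrac{\lambda}{k}\big)^{\alpha}$.

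I do not expect a genuine obstacle here: the argument is short and self-contained. The only points needing a little care are (i) keeping every monotonicity step pointed in the correct direction, each relying on $\alpha-1\ge 0$; (ii) the benign degenerate situations where $A=0$ or some $a_i$ vanishes, in which case the estimate is trivial (reading $0^{\alpha-1}$ as $0$ when $\alpha>1$); and (iii) the fact that both ingredients used in the closing step --- the power-mean inequality and the convexity of $t\mapsto t^{\alpha}$ --- genuinely require $\alpha\ge 1$, so this hypothesis cannot be dispensed with.
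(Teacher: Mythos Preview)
The paper does not supply its own proof of this lemma; it is quoted verbatim as Lemma~5.3 of \cite{CZ1} and used as a black box, so there is no ``paper's proof'' to compare against. That said, your argument is correct and self-contained. The linearization $\sum_{j>k}a_j^{\alpha}\le a_{k+1}^{\alpha-1}\sum_{j>k}a_j$, the two bounds $a_{k+1}\le A$ and $\sum_{i\le k}a_i\le kA$ (the latter via the power-mean/Jensen inequality for $\alpha\ge 1$), and the closing tangent-line estimate $(A+\lambda/k)^{\alpha}\ge A^{\alpha}+\alpha A^{\alpha-1}\lambda/k\ge A^{\alpha}+A^{\alpha-1}\lambda/k$ all check out, and the degenerate cases you flag are indeed harmless. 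Your separate treatment of $\lambda=0$ is a nice warm-up but, as you note, logically redundant once the general case is established.
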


\begin{Lemma}[\cite{CWX1}, Lemma 1]\label{l3}
For any $\tau\geq 1$ and positive integers $k,k'$ such that $\tau k'$ is an integer, then
$$\theta_{k,\tau k'}\leq \sqrt{\tau}\theta_{k,k'}.$$
\end{Lemma}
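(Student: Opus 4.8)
The plan is to derive $\theta_{k,\tau k'}\le\sqrt{\tau}\,\theta_{k,k'}$ by writing an arbitrary $\tau k'$-sparse vector as a uniform average of $k'$-sparse restrictions and invoking the defining inequality of the ROC on each restriction. Fix a $k$-sparse $u\in\mathbb{R}^{N}$ and a $\tau k'$-sparse $v\in\mathbb{R}^{N}$ with disjoint supports; since $\theta_{k,\tau k'}$ is by definition the least constant that works for every such pair, it suffices to show $|\langle Au,Av\rangle|\le\sqrt{\tau}\,\theta_{k,k'}\|u\|_{2}\|v\|_{2}$. First I would enlarge the support of $v$ to a set $S\subseteq\{1,\ldots,N\}$ with $|S|=\tau k'$ (an integer, by hypothesis) and $S\cap\mathrm{supp}(u)=\emptyset$; this is possible because $k+\tau k'\le N$. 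Let $\mathcal{S}$ be the family of all $k'$-element subsets of $S$, so that $|\mathcal{S}|=\binom{\tau k'}{k'}$, and for each $B\in\mathcal{S}$ let $v_{B}$ denote the restriction of $v$ to $B$ in the sense introduced above.

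The combinatorial core is the observation that every index of $S$ lies in exactly $\binom{\tau k'-1}{k'-1}$ members of $\mathcal{S}$, so that $\sum_{B\in\mathcal{S}}v_{B}=\lambda v$ with $\lambda:=\binom{\tau k'-1}{k'-1}$, i.e. $v=\lambda^{-1}\sum_{B\in\mathcal{S}}v_{B}$. Then I would bound, using bilinearity of $\langle A\cdot,A\cdot\rangle$, the $k'$-sparsity of each $v_{B}$ together with the disjointness of its support from that of $u$, the definition of $\theta_{k,k'}$, and the Cauchy--Schwarz inequality applied to the $|\mathcal{S}|$ summands,
$$|\langle Au,Av\rangle|=\frac{1}{\lambda}\Big|\sum_{B\in\mathcal{S}}\langle Au,Av_{B}\rangle\Big|\le\frac{\theta_{k,k'}\|u\|_{2}}{\lambda}\sum_{B\in\mathcal{S}}\|v_{B}\|_{2}\le\frac{\theta_{k,k'}\|u\|_{2}}{\lambda}\sqrt{|\mathcal{S}|}\Big(\sum_{B\in\mathcal{S}}\|v_{B}\|_{2}^{2}\Big)^{1/2}.$$
Counting again, each coordinate of $v$ appears $\lambda$ times among the $v_{B}$, hence $\sum_{B\in\mathcal{S}}\|v_{B}\|_{2}^{2}=\lambda\|v\|_{2}^{2}$, while the identity $\binom{\tau k'}{k'}=\frac{\tau k'}{k'}\binom{\tau k'-1}{k'-1}=\tau\binom{\tau k'-1}{k'-1}$ gives $|\mathcal{S}|/\lambda=\tau$. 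Plugging these in, the right-hand side collapses to $\theta_{k,k'}\|u\|_{2}\|v\|_{2}\sqrt{|\mathcal{S}|/\lambda}=\sqrt{\tau}\,\theta_{k,k'}\|u\|_{2}\|v\|_{2}$; as $u$ and $v$ were arbitrary admissible vectors, this yields $\theta_{k,\tau k'}\le\sqrt{\tau}\,\theta_{k,k'}$.

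The one delicate point --- and the reason no one-line argument is available --- is that $\tau$ need not itself be an integer (only $\tau k'$ is assumed to be), so one cannot simply split $\mathrm{supp}(v)$ into $\tau$ disjoint blocks of size $k'$ and add up the ROC estimates block by block. Averaging over all $k'$-subsets of $S$ is the device that repairs this, and the two combinatorial facts it rests on --- the per-coordinate multiplicity $\binom{\tau k'-1}{k'-1}$ and the ratio identity $\binom{\tau k'}{k'}=\tau\binom{\tau k'-1}{k'-1}$ --- are exactly what turn the crude count into the sharp factor $\sqrt{\tau}$. Apart from tracking these identities, the argument uses only bilinearity of the inner product and Cauchy--Schwarz, so I expect no further obstacle.
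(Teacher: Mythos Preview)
Your argument is correct. The combinatorial averaging over all $k'$-subsets of an index set of size $\tau k'$, together with the two identities $\sum_{B}\|v_{B}\|_{2}^{2}=\lambda\|v\|_{2}^{2}$ and $|\mathcal{S}|/\lambda=\tau$, gives exactly the factor $\sqrt{\tau}$ after Cauchy--Schwarz, and the enlargement of $\mathrm{supp}(v)$ to a set disjoint from $\mathrm{supp}(u)$ is legitimate because $\theta_{k,\tau k'}$ is only defined when $k+\tau k'\le N$.

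Note, however, that the present paper does not itself supply a proof of this lemma: it is quoted verbatim from Cai, Wang and Xu \cite{CWX1} (their Lemma~1) and used as a black box in the proof of Theorem~\ref{t3}. The argument you give is essentially the one in that reference --- the same symmetrization over $k'$-element subsets followed by Cauchy--Schwarz --- so there is no substantive divergence to discuss. Your closing remark that a naive block decomposition fails when $\tau$ is not an integer, and that the uniform averaging is precisely what rescues the sharp constant, is also the point of the original proof.
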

As we mentioned in the introduction, Cai and Zhang \cite{CZ2} provided the sharp sufficient condition for
 ensuring exact and stable sparse signals reconstruction via $l_{1}$ minimization (\ref{f4}).
 Their main result can be stated as below.

\begin{Theorem}[\cite{CZ2}, Theorem 2.6]\label{t1}
Let $y=Ax+z$ with $\|z\|_{2}\leq \varepsilon$ and
$\widehat{x}^{l_{2}}$ is the minimizer of {\rm(\ref{f4})} with
$\mathcal{B}=\mathcal{B}^{l_{2}}(\eta)=\{z:\|z\|_{2}\leq \eta\}$ for
some $\eta \geq \varepsilon$. If
\begin{align}\label{g1}
 \delta_{a}+C_{a,b,k}\theta_{a,b}<1
\end{align}
for some positive integers $a$ and $b$ with $1\leq a\leq k$, where
\begin{align}\label{g4}
   C_{a,b,k}=\max \left\{ \frac{2k-a}{\sqrt{ab}}, \sqrt{\frac{2k-a}{a}}\right\},
\end{align}
then
\begin{align}\label{g2}
  \|\widehat{x}^{l_{2}}-x\|_{2}\leq C_{0}(\varepsilon+\eta)
  +C_{1}\cdot2\|x_{-\max{(k)}}\|_{1},
\end{align}
where
\begin{align}\label{g21}
  C_{0}=\frac{\sqrt{2(1+\delta_{a})k/a}}{1-\delta_{a}-C_{a,b,k}\theta_{a,b}},\ \ \
  C_{1}=\frac{\sqrt{2k}C_{a,b,k}\theta_{a,b}}{(1-\delta_{a}-C_{a,b,k}\theta_{a,b})(2k-a)}+\frac{1}{\sqrt{k}}.
\end{align}
\end{Theorem}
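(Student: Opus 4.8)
The plan is to run the usual error-vector argument for weighted-$l_1$ recovery, sharpened by the Cai and Zhang convex-combination / shifting technique, with the three technical lemmas above as the main ingredients. Write $h=\widehat{x}^{l_2}-x$ and $T_0=\mathrm{supp}(x_{\max(k)})$, so $|T_0|\le k$. Since $x$ satisfies $\|Ax-y\|_2=\|z\|_2\le\varepsilon\le\eta$ and $\widehat{x}^{l_2}$ is feasible for (\ref{f4}) with $\mathcal{B}=\mathcal{B}^{l_2}(\eta)$, the triangle inequality gives the tube estimate $\|Ah\|_2\le\varepsilon+\eta$; from the $l_1$-optimality $\|\widehat{x}^{l_2}\|_1\le\|x\|_1$ and splitting the $l_1$ norm across $T_0$ and $T_0^c$ one obtains the cone estimate $\|h_{T_0^c}\|_1\le\|h_{T_0}\|_1+2\|x_{-\max(k)}\|_1$.

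Next, fix an $a$-element head $H$ chosen from $T_0$ (for instance, the indices of the $a$ largest-in-magnitude entries of $h_{T_0}$), so $h_H$ is $a$-sparse. Using the cone estimate together with the ordering of $|h|$, reorganize the remaining part $h-h_H$ into $b$-sparse pieces supported off $H$ whose total $l_1$ mass is at most a multiple of $\|h_H\|_2$ of order $(2k-a)/\sqrt a$ plus a term proportional to $\|x_{-\max(k)}\|_1$, and whose $l_\infty$ norms are suitably bounded by $\|h_H\|_2/\sqrt a$ (this last being how the residual gets absorbed); the total tail $l_2$ mass is then estimated by Lemma \ref{l2}. This is where the factor $2k-a$ enters.

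The core estimate is the expansion
\[
\|Ah_H\|_2^2=\langle Ah_H,Ah\rangle-\langle Ah_H,A(h-h_H)\rangle .
\]
The first term is at most $\|Ah_H\|_2\|Ah\|_2\le\sqrt{1+\delta_a}\,\|h_H\|_2(\varepsilon+\eta)$ by the RIP upper bound and the tube estimate. The second is a sum of inner products of the $a$-sparse $h_H$ with the $b$-sparse tail pieces, each bounded by Lemma \ref{l1} and rescaled to ROC order $(a,b)$ by Lemma \ref{l3}; balancing the two ways the size of a tail piece can sit relative to $b$ produces exactly $C_{a,b,k}=\max\{(2k-a)/\sqrt{ab},\sqrt{(2k-a)/a}\}$ --- the first branch when $2k-a\ge b$ (the pieces must be cut into blocks of size $b$ and recombined via Lemma \ref{l3}), the second when $2k-a\le b$ (no cutting needed). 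Combining with the RIP lower bound $\|Ah_H\|_2^2\ge(1-\delta_a)\|h_H\|_2^2$, dividing by $\|h_H\|_2$, and invoking (\ref{g1}) so that $1-\delta_a-C_{a,b,k}\theta_{a,b}>0$, one gets $\|h_H\|_2\le\big(\sqrt{1+\delta_a}(\varepsilon+\eta)+C'\|x_{-\max(k)}\|_1\big)/(1-\delta_a-C_{a,b,k}\theta_{a,b})$ for an explicit $C'$. Finally $\|h\|_2\le\|h_H\|_2+\|h-h_H\|_2$, with $\|h-h_H\|_2$ controlled by a multiple of $\|h_H\|_2$ plus a multiple of $\|x_{-\max(k)}\|_1$; tracking all the constants delivers (\ref{g2}) with $C_0,C_1$ as in (\ref{g21}).

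The main obstacle, as is typical for sharp RIP/ROC bounds, is the constant-chasing in the tail: choosing the head $H$ and the organization of $h-h_H$ into $b$-sparse pieces so that the ROC appears at order exactly $(a,b)$, dealing with the leftover short block via Lemma \ref{l3}, carrying the residual $\|x_{-\max(k)}\|_1$ through without degrading the leading constant, and verifying that the two regimes $2k-a\gtrless b$ really collapse into the single expression (\ref{g4}) --- which is the source of the maximum there. By comparison, the tube/cone setup and the RIP bookkeeping are routine; obtaining $C_0$ and $C_1$ in the precise form (\ref{g21}) is the delicate part.
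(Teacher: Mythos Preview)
Your overall architecture matches the paper's proof of Theorem~\ref{t3} (which, at $\omega=1$, is the paper's proof of the present statement): cone inequality, tube estimate $\|Ah\|_2\le\varepsilon+\eta$, expand $\|Ah_H\|_2^2$, bound the cross term via Lemma~\ref{l1}, convert to $\theta_{a,b}$ via Lemma~\ref{l3}, and finish with Lemma~\ref{l2}. But there is one concrete gap.

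You choose the head $H$ inside $T_0$ (the top $a$ entries of $h_{T_0}$). With that choice the $l_\infty$ control you assert on the tail, $\|h_{H^c}\|_\infty\le\|h_H\|_2/\sqrt a$, need not hold: entries of $h$ on $T_0^c$ are only controlled in $l_1$ by the cone estimate, and a single coordinate in $T_0^c$ can carry mass far exceeding every entry of $h_H$. Without that $l_\infty$ bound, Lemma~\ref{l1} does not apply with the $\lambda$ you want. The paper avoids this by taking the head to be $h_{\max(a)}$, the \emph{global} top $a$ entries of $h$, not restricted to $T_0$. Writing $h=\sum_i c_i u_i$ with $c_1\ge c_2\ge\cdots\ge 0$, this immediately gives
\[
\|h_{-\max(a)}\|_\infty=c_{a+1}\le\frac{1}{a}\sum_{i=1}^a c_i\le\frac{\|h_{\max(a)}\|_2}{\sqrt a},
\]
which is exactly the $\lambda$ needed in Lemma~\ref{l1}. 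The $l_1$ bound on $h_{-\max(a)}$ is then obtained by splitting $\sum_{i>a}c_i=\sum_{i=a+1}^{k}c_i+\sum_{i>k}c_i$, bounding the first sum by $(k-a)c_{a+1}\le\frac{k-a}{\sqrt a}\|h_{\max(a)}\|_2$ and the second by the cone inequality (using $\|h_{-\max(k)}\|_1\le\|h_{T_0^c}\|_1$ and $\|h_{T_0}\|_1\le\sqrt{k}\|h_{T_0}\|_2\le\sqrt{k/a}\cdot\sqrt{k}\|h_{\max(a)}\|_2$); together these give the factor $2k-a$.

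A smaller difference in execution: you propose to cut $h-h_H$ into $b$-sparse blocks and apply the ROC blockwise. The paper instead applies Lemma~\ref{l1} \emph{once} to the entire non-sparse tail $h_{-\max(a)}$ with $k_2=2k-a$, producing $\theta_{a,2k-a}$, and only afterwards invokes Lemma~\ref{l3}:
\[
\sqrt{\tfrac{2k-a}{a}}\,\theta_{a,2k-a}\ \le\ \sqrt{\tfrac{2k-a}{a}}\,\max\Bigl\{\sqrt{\tfrac{2k-a}{b}},\,1\Bigr\}\,\theta_{a,b}\ =\ C_{a,b,k}\,\theta_{a,b}.
\]
So the two branches of the maximum in (\ref{g4}) come from the single dichotomy $b\le 2k-a$ versus $b>2k-a$ in this conversion, not from two different block arrangements. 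This route is cleaner and makes the explicit constants $C_0,C_1$ in (\ref{g21}) fall out directly once one also uses $\|h_{\max(k)}\|_2\le\sqrt{k/a}\,\|h_{\max(a)}\|_2$ at the end.
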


\begin{Theorem}[\cite{CZ2}, Theorem 2.7]\label{t2}
Let $y=Ax+z$ with $\|A^{T}z\|_{\infty}\leq \varepsilon$ and
$\widehat{x}^{DS}$ is the minimizer of {\rm(\ref{f4})} with
$\mathcal{B}=\mathcal{B}^{DS}(\eta)=\{z:\|A^{T}z\|_{\infty}\leq
\eta\}$ for some $\eta \geq \varepsilon$. If
$\delta_{a}+C_{a,b,k}\theta_{a,b}<1$
 for some positive integers $a$ and $b$ with $1\leq a\leq k$, where
$ C_{a,b,k}=\max \left\{ \frac{2k-a}{\sqrt{ab}}, \sqrt{\frac{2k-a}{a}}\right\},$
 then
\begin{align}\label{g3}
  \|\widehat{x}^{DS}-x\|_{2}\leq  C'_{0}(\varepsilon+\eta)
  + C'_{1}\cdot2\|x_{-\max{(k)}}\|_{1},
\end{align}
where
\begin{align}\label{g22}
  C'_{0}=\frac{\sqrt{2k}}{1-\delta_{a}-C_{a,b,k}\theta_{a,b}},\ \ \ C'_{1}=C_{1}.
\end{align}
\end{Theorem}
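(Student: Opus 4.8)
The plan is to mimic the proof of Theorem \ref{t1} essentially line by line; the only genuine change is in the way the Dantzig--Selector feasibility constraint is converted into a bound on the correlation between $Ah$ (where $h$ is the error vector) and sparse test vectors.

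First I would set $h=\widehat{x}^{DS}-x$. Because $\|A^{T}z\|_{\infty}\le\varepsilon\le\eta$, the true signal $x$ is feasible for (\ref{f4}) with $\mathcal{B}=\mathcal{B}^{DS}(\eta)$, so the minimality of $\widehat{x}^{DS}$ gives $\|\widehat{x}^{DS}\|_{1}\le\|x\|_{1}$. Expanding $\|x+h\|_{1}$ on $T_{0}=\mathrm{supp}(x_{\max(k)})$ and on $T_{0}^{c}$ in the standard fashion produces the cone condition $\|h_{T_{0}^{c}}\|_{1}\le\|h_{T_{0}}\|_{1}+2\|x_{-\max(k)}\|_{1}$. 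This step, and all the combinatorial bookkeeping that follows from it, is unchanged from Theorem \ref{t1}.

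Next I would record the Dantzig--Selector measurement inequality. Since $y-A\widehat{x}^{DS}\in\mathcal{B}^{DS}(\eta)$ and $z=y-Ax\in\mathcal{B}^{DS}(\varepsilon)$, we get $\|A^{T}Ah\|_{\infty}\le\|A^{T}(A\widehat{x}^{DS}-y)\|_{\infty}+\|A^{T}z\|_{\infty}\le\eta+\varepsilon$. Hence, for every $m$-sparse $u\in\mathbb{R}^{N}$,
\[
|\langle Au,Ah\rangle|=|\langle u,A^{T}Ah\rangle|\le\|u\|_{1}\,\|A^{T}Ah\|_{\infty}\le\sqrt{m}\,\|u\|_{2}\,(\varepsilon+\eta),
\]
which is the exact replacement for the estimate $|\langle Au,Ah\rangle|\le\|Au\|_{2}\|Ah\|_{2}\le\sqrt{1+\delta_{m}}\,\|u\|_{2}(\varepsilon+\eta)$ used in the $l_{2}$-bounded setting of Theorem \ref{t1}.

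Then I would carry out the Cai--Zhang argument from Theorem \ref{t1} without modification: peel off the head $h_{\max(a)}$, use $\langle Ah_{\max(a)},Ah\rangle=\|Ah_{\max(a)}\|_{2}^{2}+\langle Ah_{\max(a)},Ah_{-\max(a)}\rangle$, control the tail interaction $\langle Ah_{\max(a)},Ah_{-\max(a)}\rangle$ by means of Lemma \ref{l1}, Lemma \ref{l2} and Lemma \ref{l3} --- this is exactly where $\theta_{a,b}$ and the two regimes defining $C_{a,b,k}$ in (\ref{g4}) enter --- and combine everything with the RIP lower bound $\|Ah_{\max(a)}\|_{2}^{2}\ge(1-\delta_{a})\|h_{\max(a)}\|_{2}^{2}$ and the cone condition. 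Substituting the new measurement inequality wherever the $l_{2}$-based bound was used merely replaces the factor $\sqrt{1+\delta_{a}}$ by $\sqrt{a}$ in the noise term, leaving the terms multiplying $\|x_{-\max(k)}\|_{1}$ untouched. Solving the resulting inequality for $\|h_{\max(a)}\|_{2}$, then for $\|h_{T_{0}}\|_{2}$, and finally bounding $\|h\|_{2}$ via the cone condition and Lemma \ref{l2} gives $\|\widehat{x}^{DS}-x\|_{2}\le C'_{0}(\varepsilon+\eta)+C'_{1}\cdot2\|x_{-\max(k)}\|_{1}$ with $C'_{0}=\sqrt{2k}/(1-\delta_{a}-C_{a,b,k}\theta_{a,b})$ and $C'_{1}=C_{1}$.

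The hard part is purely the constant bookkeeping: I must check that replacing $\sqrt{1+\delta_{a}}$ by $\sqrt{a}$ throughout the noise term of Theorem \ref{t1}'s chain of estimates reproduces precisely $C'_{0}$ (observe $C'_{0}=C_{0}\cdot\sqrt{a/(1+\delta_{a})}$), and that no $\|x_{-\max(k)}\|_{1}$-term acquires a noise-type dependence, so that indeed $C'_{1}=C_{1}$. The structural content of the proof --- the cone condition, the sparse-representation decomposition, and the appearance of $C_{a,b,k}$ --- is inherited verbatim from the proof of Theorem \ref{t1}.
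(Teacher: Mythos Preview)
Your proposal is correct and matches the paper's own approach essentially line by line. The paper (in its proof of the more general Theorem~\ref{t3}) handles the Dantzig--Selector case exactly as you describe: it replaces the $l_2$-based estimates (\ref{g14})--(\ref{g15}) by (\ref{g19})--(\ref{g20}), i.e.\ $\|A^{T}Ah\|_{\infty}\le \varepsilon+\eta$ and $|\langle Ah_{\max(a)},Ah\rangle|\le\sqrt{a}\,\|h_{\max(a)}\|_{2}(\varepsilon+\eta)$, and otherwise reuses the $l_2$ argument verbatim, so that $\sqrt{1+\delta_{a}}$ is replaced by $\sqrt{a}$ in the noise constant (giving $C'_{0}=C_{0}\sqrt{a/(1+\delta_{a})}=\sqrt{2k}/(1-\delta_{a}-C_{a,b,k}\theta_{a,b})$) while $C'_{1}=C_{1}$ is unchanged.
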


Cai and Zhang pointed out that the sufficient condition (\ref{g1})
is sharp in Theorem 2.8 (see \cite{CZ2}). Namely, if
$\delta_{a}+C_{a,b,k}\theta_{a,b}=1$, there does not exist any
method that can exactly recover all $k-$sparse signals in noiseless
case. Also, in noisy case, for any $\varepsilon>0$,~
$\delta_{a}+C_{a,b,k}\theta_{a,b}<1+\varepsilon$ can not guarantee
the stable recovery of all $k-$sparse signals.

\section{Main results}\label{3}
\begin{Theorem}\label{t3}
Let $x\in\mathbb{R}^{N}$ be an arbitrary signal and its best $k-$term approximation support on $T_{0}\subseteq \{1, \ldots, N\}$ with $|T_{0}|\leq k$.
Let $\widetilde{T}\subseteq \{1, \ldots, N\}$ be an arbitrary set and denote $\rho \geq 0$ and $0\leq \alpha \leq 1$ such that $|\widetilde{T}|=\rho k$ and $|\widetilde{T}\cap T_{0}|=\alpha \rho k.$
 Let $y=Ax+z$ with $\|z\|_{2}\leq\varepsilon$ and $\widehat{x}^{l_{2}}$ is the minimizer of (\ref{f2}) with (\ref{b1}). If
\begin{align}\label{g23}
  \delta_{a}+C_{a,b,k}^{\alpha,\omega}\theta_{a,b}<1
\end{align}
 for some positive integers $a$ and $b$ with $1\leq a\leq k$, where
\begin{align}\label{g5}
 C_{a,b,k}^{\alpha,\omega}=\max\left\{ \frac{s}{\sqrt{ab}}, \sqrt{\frac{s}{a}}\right\}
\end{align}
 with
\begin{align}\label{g6}
 s=\left[\left[k-a+\omega k+(1-\omega)\sqrt{(1+\rho-2\alpha \rho)k}\cdot\max\{\sqrt{(1+\rho-2\alpha \rho)k}, \sqrt{a}\}\right]\right].
 \end{align}
Then
\begin{align}\label{g9}
\|\widehat{x}^{l_{2}}-x\|_{2}
\leq D_{0}(2\varepsilon)
+D_{1}\cdot2\left(\omega\|x_{T_{0}^{c}}\|_{1}+(1-\omega)\|x_{\widetilde{T}^{c}\cap T_{0}^{c}}\|_{1}\right),
 \end{align}
where
\begin{equation}\label{g24}
  \begin{split}
     D_{0}&= \frac{\sqrt{2(1+\delta_{a})d/a}}{1-\delta_{a}-C_{a,b,k}^{\alpha,\omega}\theta_{a,b}},\\
     D_{1}& =\frac{\sqrt{2d}C_{a,b,s}^{\alpha,\omega}\theta_{a,b}}{(1-\delta_{a}-C_{a,b,k}^{\alpha,\omega}\theta_{a,b})s}+\frac{1}{\sqrt{d}}.
\end{split}
\end{equation}

  Let $y=Ax+z$ with $\|A^{T}z\|_{\infty}\leq\varepsilon$. Assume that $\widehat{x}^{DS}$ is the minimizer of (\ref{f2}) with (\ref{b2}) and (\ref{g23}) holds. If
$$\delta_{a}+C_{a,b,k}^{\alpha,\omega}\theta_{a,b}<1$$
for some positive integers $a$ and $b$ with $1\leq a\leq k$, where
$$
 C_{a,b,k}^{\alpha,\omega}=\max\left\{ \frac{s}{\sqrt{ab}}, \sqrt{\frac{s}{a}}\right\},
 $$ where $s$ is given in (\ref{g6}).
Then
 \begin{align}\label{g10}
\|\widehat{x}^{DS}-x\|_{2}
\leq  D'_{0}(2\varepsilon)+ D'_{1}
\cdot2\left(\omega\|x_{T_{0}^{c}}\|_{1}+(1-\omega)\|x_{\widetilde{T}^{c}\cap T_{0}^{c}}\|_{1}\right),
\end{align}
where
\begin{equation}\label{g25}
  \begin{split}
    D'_{0}=\frac{\sqrt{2d}}{1-\delta_{a}-C_{a,b,k}^{\alpha,\omega}\theta_{a,b}},\ \
    D'_{1}=D_{1}.
   \end{split}
\end{equation}
Here \begin{align}\label{g7}
      d=\left\{
      \begin{array}{cc}
        k, &\omega=1, \\
        \max\{k, ~(1+\rho-2\alpha\rho)k\}, &0\leq \omega <1 .
      \end{array}
      \right.
     \end{align}
\end{Theorem}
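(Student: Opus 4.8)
The plan is to adapt, to the weighted functional in (\ref{f2}), the geometric argument behind Cai and Zhang's Theorems \ref{t1}--\ref{t2}. Write $h=\widehat{x}^{l_2}-x$. Since $\|y-Ax\|_2=\|z\|_2\le\varepsilon$ the true signal $x$ is feasible in (\ref{f2}), so $\|\widehat{x}^{l_2}\|_{1,\mathrm{w}}\le\|x\|_{1,\mathrm{w}}$; splitting $\|x+h\|_{1,\mathrm{w}}$ over the four blocks $T_0\cap\widetilde T$, $T_0\cap\widetilde T^c$, $T_0^c\cap\widetilde T$, $T_0^c\cap\widetilde T^c$ (weight $\omega$ on the first and third, weight $1$ on the others), using the triangle inequality and cancelling the $x$-terms supported on $T_0$, one obtains the weighted cone inequality
\begin{align}\label{pf-cone}
\|h_{T_0^c}\|_1\le \omega\|h_{T_0\cap\widetilde T}\|_1+\|h_{T_0\cap\widetilde T^c}\|_1+(1-\omega)\|h_{T_0^c\cap\widetilde T}\|_1+2\Big(\omega\|x_{T_0^c}\|_1+(1-\omega)\|x_{\widetilde T^c\cap T_0^c}\|_1\Big),
\end{align}
whose last term, $2\mathcal E$ with $\mathcal E=\omega\|x_{T_0^c}\|_1+(1-\omega)\|x_{\widetilde T^c\cap T_0^c}\|_1$, already coincides with the residual term in (\ref{g9}) and (\ref{g10}). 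Feasibility of both $x$ and $\widehat{x}^{l_2}$ also gives $\|Ah\|_2\le 2\varepsilon$ (and $\|A^TAh\|_\infty\le 2\varepsilon$ in the Dantzig-selector case).

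The heart of the proof is to feed (\ref{pf-cone}) into the Cai--Zhang decomposition: isolate an $a$-sparse head $S$ of $h$ (its $a$ largest entries), after which (\ref{pf-cone}) must be turned into control on $\|h_{S^c}\|_1$ and $\|h_{S^c}\|_\infty$ so that the ROC can be applied to $\langle Ah_S,Ah_{S^c}\rangle$ via Lemma \ref{l1} and, after an order change, Lemma \ref{l3}. The new feature is that the three $h$-norms on the right of (\ref{pf-cone}) live on blocks of sizes $|T_0\cap\widetilde T|=\alpha\rho k$, $|T_0\cap\widetilde T^c|\le k-\alpha\rho k$ and $|T_0^c\cap\widetilde T|=(1-\alpha)\rho k$, carrying coefficients $\omega,1,1-\omega$, the last two blocks forming $T_0\triangle\widetilde T$, of size at most $(1+\rho-2\alpha\rho)k$. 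Converting these $\ell_1$-norms to $\ell_2$-norms by Cauchy--Schwarz and regrouping, the quantities $2k-a$ and $k$ of the unweighted proof turn into the integer $s$ of (\ref{g6}) and $d$ of (\ref{g7}): the $k-a+\omega k$ inside $s$ is the $T_0$ contribution (the residual of $T_0$ below the head, plus the $\omega$-weighted second copy of $h_{T_0}$), while $(1-\omega)\sqrt{(1+\rho-2\alpha\rho)k}\cdot\max\{\sqrt{(1+\rho-2\alpha\rho)k},\sqrt a\}$ is the $T_0\triangle\widetilde T$ contribution, its internal $\max$ recording whether that set is larger or smaller than the head; correspondingly $d$ stays $k$ when $\omega=1$ and becomes $\max\{k,(1+\rho-2\alpha\rho)k\}$ when $\omega<1$.

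With $s$ and $d$ fixed, the remainder is the Cai--Zhang estimate. One expands $\langle Ah,Ah_S\rangle=\|Ah_S\|_2^2+\langle Ah_S,Ah_{S^c}\rangle$, bounds $\|Ah_S\|_2^2\ge(1-\delta_a)\|h_S\|_2^2$ by the RIP and $|\langle Ah_S,Ah_{S^c}\rangle|$ by Lemma \ref{l1} (with $\ell_\infty$-parameter a normalized head level and order $k_2\le s$), then replaces $\theta_{a,k_2}$ by $\theta_{a,b}$ via Lemma \ref{l3}; this is where $C_{a,b,k}^{\alpha,\omega}=\max\{s/\sqrt{ab},\sqrt{s/a}\}$ enters, the two branches being $b\ge s$ (monotonicity of $\theta$) and $b\le s$ (Lemma \ref{l3} with ratio $s/b$). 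Comparing with $\langle Ah,Ah_S\rangle\le\|Ah\|_2\|Ah_S\|_2\le2\varepsilon\sqrt{1+\delta_a}\,\|h_S\|_2$ — or, in the Dantzig case, $\langle Ah,Ah_S\rangle=\langle h_S,A^TAh\rangle\le\|h_S\|_1\|A^TAh\|_\infty\le2\varepsilon\sqrt a\,\|h_S\|_2$ — and using (\ref{g23}) to keep $1-\delta_a-C_{a,b,k}^{\alpha,\omega}\theta_{a,b}>0$, one solves for $\|h_S\|_2$ as a linear function of $\varepsilon$ and $\mathcal E$ over that denominator. Finally Lemma \ref{l2} bounds $\|h_{S^c}\|_2$ by $\|h_S\|_2$ plus a multiple of $\mathcal E$, and $\|h\|_2\le\|h_S\|_2+\|h_{S^c}\|_2$ collapses into (\ref{g9}) with $D_0,D_1$ as in (\ref{g24}) — the factor $\sqrt{2(1+\delta_a)d/a}$ being $\sqrt2$ from Lemma \ref{l2}, $\sqrt{1+\delta_a}$ from the RIP upper bound, and $\sqrt{d/a}$ from moving between the $\ell_2$-norm on a size-$\le d$ block and on the size-$a$ head — and into (\ref{g10}) with $D'_0,D'_1$ as in (\ref{g25}), where the absent $\sqrt{1+\delta_a}$ reflects that the Dantzig bound used $\|h_S\|_1$ rather than $\|Ah_S\|_2$.

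The step I expect to be the main obstacle is the middle one: showing that the weighted right-hand side of (\ref{pf-cone}), after the block-wise Cauchy--Schwarz, really telescopes into a single $\sqrt s$ with $s$ exactly (\ref{g6}) and a head of size at most $d$, keeping both maxima (the one inside $s$ and the one in $C_{a,b,k}^{\alpha,\omega}$) consistent, and checking that the integer truncation $[[\cdot]]$ does no harm to the ROC orders. Once $s$ and $d$ are identified, everything else is a term-by-term transcription of Cai and Zhang's proofs of Theorems \ref{t1}--\ref{t2}, with $2k-a$, $k$ and $C_{a,b,k}$ replaced by $s$, $d$ and $C_{a,b,k}^{\alpha,\omega}$, and the two noise models handled verbatim as there.
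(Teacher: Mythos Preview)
Your proposal is correct and follows essentially the same route as the paper: your weighted cone inequality (\ref{pf-cone}) is algebraically identical to the Friedlander--Mansour--Saab--Yilmaz inequality (\ref{g11}) the paper quotes (just regroup $\omega\|h_{T_0}\|_1+(1-\omega)\|h_{T_0\triangle\widetilde T}\|_1$), and from there your outline---bound $\|h_{-\max(a)}\|_1$ and $\|h_{-\max(a)}\|_\infty$, apply Lemma~\ref{l1} with $k_2=s$, compare against $\langle Ah,Ah_{\max(a)}\rangle$, convert $\theta_{a,s}$ to $\theta_{a,b}$ via Lemma~\ref{l3}, then finish with Lemma~\ref{l2} at level $d$---matches the paper step for step. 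The only slip is notational: in your last paragraph you apply Lemma~\ref{l2} to ``$\|h_{S^c}\|_2$'' with $S$ still the $a$-head, whereas (as your own parsing of the $\sqrt{d/a}$ factor shows) it must be applied to the $d$-head; the paper does exactly this, bounding $\|h_{-\max(d)}\|_2\le\|h_{\max(d)}\|_2+2\mathcal E/\sqrt d$ and then $\|h_{\max(d)}\|_2\le\sqrt{d/a}\,\|h_{\max(a)}\|_2$.
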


\begin{Remark}
In Theorem \ref{t3}, we observed that every signal $x\in\mathbb{R}^{N}$ can be stably and robustly recovered.
And if $\mathcal{B}=\{0\}$ and $x$ is a $k-$sparse signal, then Theorem \ref{t3} ensures exact recovery of the signal $x$.
\end{Remark}

When the the measurement model (\ref{m1}) is with Gaussian noise,
the above results on the bounded noise case can be directly applicable to the case where the noise is Gaussian
by using the same argument as in \cite{CXZ,CWX1}.
This is due to the fact Gaussian noise is essentially bounded.
The concrete content is stated as follows.
\begin{Remark}
Let $x\in\mathbb{R}^{N}$ be an arbitrary signal and its best $k-$term approximation support on $T_{0}\subseteq \{1, \ldots, N\}$ with $|T_{0}|\leq k$.
Let $\widetilde{T}\subseteq \{1, \ldots, N\}$ be an arbitrary set and define $\rho \geq 0$ and $0\leq \alpha \leq 1$ such that $|\widetilde{T}|=\rho k$ and $|\widetilde{T}\cap T_{0}|=\alpha \rho k.$
Assume that $z\sim N_{n}(0, \sigma^{2}I)$ in (\ref{m1}) and
$\delta_{a}+C_{a,b,k}^{\alpha,\omega}\theta_{a,b}<1$ for some positive integers $a$ and $b$ with $1\leq a\leq k$, where
 $C_{a,b,k}^{\alpha,\omega}=\max\left\{ \frac{s}{\sqrt{ab}}, \sqrt{\frac{s}{a}}\right\}$
 with $s=\left[\left[k-a+\omega k+(1-\omega)\sqrt{(1+\rho-2\alpha \rho)k}\cdot\max\{\sqrt{(1+\rho-2\alpha \rho)k}, \sqrt{a}\}\right]\right]$. Let
$\mathcal{B}^{l_{2}}=\{z: \|z\|_{2}\leq\sigma\sqrt{n+2\sqrt{n\log n}}\}$
and $\mathcal{B}^{DS}=\{z: \|A^{T}z\|_{\infty}\leq\sigma\sqrt{2\log N}\}$.
$\widehat{x}^{l_{2}}$ and $\widehat{x}^{DS}$ is the minimizer of (\ref{f2}) with $\mathcal{B}^{l_{2}}$ and $\mathcal{B}^{DS}$, respectively.
Then, with probability at least $1-1/n$,
\begin{align*}
  \|\widehat{x}^{l_{2}}-x\|_{2}
&\leq D_{0}(2\sigma\sqrt{n+2\sqrt{n\log n}})+D_{1}
\cdot2\left(\omega\|x_{T_{0}^{c}}\|_{1}+(1-\omega)\|x_{\widetilde{T}^{c}\cap T_{0}^{c}}\|_{1}\right),
\end{align*}
and
\begin{align*}
  \|\widehat{x}^{DS}-x\|_{2}
\leq D'_{0}(2\sigma\sqrt{2\log N})+D'_{1}
\cdot2\left(\omega\|x_{T_{0}^{c}}\|_{1}+(1-\omega)\|x_{\widetilde{T}^{c}\cap T_{0}^{c}}\|_{1}\right),
\end{align*}
 with probability at least $1-1/\sqrt{\pi\log N}$.
 \end{Remark}

\begin{Theorem}\label{t4}
Let $1\leq a\leq s \leq k$, $a+s\leq N$ and $b\geq 1$, where $s$ is defined as (\ref{g6}).
Then there exists a sensing matrix $A\in \mathbb{R}^{n\times N}$ satisfying $\delta_{a}+C_{a,b,k}^{\alpha,\omega}\theta_{a,b}=1$
where
 $C_{a,b,k}^{\alpha,\omega}=\max\left\{ \frac{s}{\sqrt{ab}}, \sqrt{\frac{s}{a}}\right\}$
 and some $k-$sparse vector $\eta \in\mathbb{R}^{N}$ such that the weighted $l_{1}$ minimization method (\ref{f2})
 fails to exactly recover the $k-$sparse vector $\eta$ in the noiseless case
 and stably recover the $k-$sparse vector $\eta$ in the noise case.
 \end{Theorem}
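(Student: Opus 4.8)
I would prove sharpness by exhibiting, for the given parameters $(a,b,\rho,\alpha,\omega)$, an explicit sensing matrix $A$ with $\delta_{a}+C_{a,b,k}^{\alpha,\omega}\theta_{a,b}=1$ together with a $k$-sparse vector $\eta$ and a support estimate $\widetilde{T}$ of the prescribed sizes for which the weighted $l_{1}$ program (\ref{f2}) cannot single out $\eta$. The guiding reduction is standard: it suffices to find a nonzero $h$ with $Ah=0$, a set $T_{0}$ with $|T_{0}|=k$, and $\widetilde{T}$ with $|\widetilde{T}|=\rho k$, $|\widetilde{T}\cap T_{0}|=\alpha\rho k$, such that $\sum_{i\in T_{0}^{c}}\mathrm{w}_{i}|h_{i}|\le\sum_{i\in T_{0}}\mathrm{w}_{i}|h_{i}|$. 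Setting $\eta:=h_{T_{0}}$ (which is $k$-sparse, with best $k$-term support exactly $T_{0}$) and $\eta':=\eta-h$, we get $A\eta=A\eta'$, $\eta\ne\eta'$, and $\|\eta'\|_{1,\mathrm{w}}\le\|\eta\|_{1,\mathrm{w}}$; hence in the noiseless case $\eta$ is not the unique minimizer of (\ref{f2}), so exact recovery fails, and in the noisy case, taking $y=A\eta$ keeps $\eta'$ feasible so the minimizer may equal $\eta'$ and $\|\widehat{x}-\eta\|_{2}=\|h\|_{2}$ is unbounded after rescaling $h$, while $\|\eta_{-\max(k)}\|_{1}=0$, ruling out any estimate of the form (\ref{g9})/(\ref{g10}).

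Next I would construct $A$ through its Gram matrix. Let $\Lambda\supseteq T_{0}$ have $|\Lambda|=a+s$ — possible since $s\ge k-a$ follows at once from (\ref{g6}) and $a+s\le N$ by hypothesis — put $w:=\mathbf{1}_{\Lambda}/\sqrt{a+s}$, and take $A$ with $A^{T}A=I-ww^{T}$ on the coordinates of $\Lambda$ and $A^{T}A=I$ on $\Lambda^{c}$ (realized, e.g., by an $(N-1)\times N$ matrix of rank $N-1$). Then $h:=\mathbf{1}_{\Lambda}\in\ker A$. For $a$-sparse $u$ and $b$-sparse $v$ with disjoint supports, $u^{T}v=0$ gives $\|Au\|_{2}^{2}=\|u\|_{2}^{2}-(w^{T}u_{\Lambda})^{2}$ and $\langle Au,Av\rangle=-(w^{T}u_{\Lambda})(w^{T}v_{\Lambda})$; maximizing over supports — the extremizers being indicators of an $a$-subset and a $\min\{b,s\}$-subset of $\Lambda$ — yields $\delta_{a}=\tfrac{a}{a+s}$ and $\theta_{a,b}=\tfrac{\sqrt{a\min\{b,s\}}}{a+s}$. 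Substituting these into $C_{a,b,k}^{\alpha,\omega}=\max\{s/\sqrt{ab},\sqrt{s/a}\}$ and separating the regimes $s\ge b$ (where $C=s/\sqrt{ab}$, $\min\{b,s\}=b$) and $s\le b$ (where $C=\sqrt{s/a}$, $\min\{b,s\}=s$) gives $\delta_{a}+C_{a,b,k}^{\alpha,\omega}\theta_{a,b}=\tfrac{a+s}{a+s}=1$ in both. One also checks that $|\Lambda|=a+s$ is forced: with $|\Lambda|=a+m$ the equation $\delta_{a}+C_{a,b,k}^{\alpha,\omega}\theta_{a,b}=1$ returns $m=s$ in every regime, which is why $s$ in (\ref{g6}) is the right quantity.

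It then remains to verify the weighted balance $\sum_{i\in\Lambda\setminus T_{0}}\mathrm{w}_{i}\le\sum_{i\in T_{0}}\mathrm{w}_{i}$ for $h=\mathbf{1}_{\Lambda}$ (all other entries of $h$ vanish). Since $h$ is constant on $\Lambda$, each side merely counts indices: $\sum_{i\in T_{0}}\mathrm{w}_{i}=k-(1-\omega)\alpha\rho k$, and, placing as many as possible of the $(1-\alpha)\rho k$ indices of $\widetilde{T}\setminus T_{0}$ into $\Lambda\setminus T_{0}$ (the remainder into $\Lambda^{c}$, which only requires $N$ large enough to host $\widetilde{T}$ with the prescribed overlap), $\sum_{i\in\Lambda\setminus T_{0}}\mathrm{w}_{i}=(a+s-k)-(1-\omega)\min\{(1-\alpha)\rho k,\,a+s-k\}$. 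The required inequality thus reduces to a numerical comparison of $s$ with an explicit expression in $a,k,\rho,\alpha,\omega$; checking it splits into cases according to whether $(1-\alpha)\rho k$ exceeds $a+s-k$, whether $\alpha\lessgtr\tfrac12$ (equivalently whether $(1+\rho-2\alpha\rho)k\gtrless k$), and which of $\sqrt{(1+\rho-2\alpha\rho)k}$ and $\sqrt{a}$ attains the maximum in (\ref{g6}), and in each case the inequality holds precisely because the coefficients in (\ref{g6}) and the rounding $[[\cdot]]$ (which reconciles the integrality of $s$, $\rho k$, $\alpha\rho k$) are tuned to it. This case analysis is the main obstacle; the matrix construction and the RIC/ROC computations are routine by comparison. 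As a sanity check, when $\omega=1$ everything collapses to $s=2k-a$, $|\Lambda|=2k$, $\|\eta\|_{1}=\|\eta'\|_{1}=k$, recovering Theorem 2.8 of \cite{CZ2}.
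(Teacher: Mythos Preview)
Your matrix construction and the RIC/ROC computations are essentially the paper's: both take a rank-one Gram defect on a set $\Lambda$ of size $a+s$, put $h=\mathbf{1}_\Lambda$ in the kernel, and split $h=\eta-\gamma$ into a $k$-sparse piece and a piece of size $a+s-k$. (The paper inserts the scaling factor $\sqrt{1+\frac{L-s}{L+s}}$, which changes $\delta_a$ and $\theta_{a,b}$ individually but not their weighted sum; your unscaled version is an equally valid variant.)

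Where you diverge from the paper is in the choice of $\widetilde{T}$, and this is where your argument has a genuine gap. You insist that the counterexample's support estimate satisfy $|\widetilde{T}\cap T_0|=\alpha\rho k$ and then claim the weighted balance
\[
\sum_{i\in\Lambda\setminus T_0}\mathrm{w}_i\ \le\ \sum_{i\in T_0}\mathrm{w}_i
\]
``holds precisely because the coefficients in (\ref{g6}) and the rounding $[[\cdot]]$ are tuned to it.'' This is false. Take $k=10$, $a=5$, $\rho=1$, $\alpha=0.8$, $\omega=0$ (all the relevant quantities $\rho k,\alpha\rho k$ are integers). Then $(1+\rho-2\alpha\rho)k=4<a$, so (\ref{g6}) gives $s=[[\,5+2\sqrt{5}\,]]=10$ and $|\Lambda\setminus T_0|=a+s-k=5$. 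Placing the two indices of $\widetilde{T}\setminus T_0$ inside $\Lambda\setminus T_0$ (your optimal choice) yields
\[
\sum_{i\in\Lambda\setminus T_0}\mathrm{w}_i=5-2=3,\qquad \sum_{i\in T_0}\mathrm{w}_i=10-8=2,
\]
so the balance fails and your $\eta'$ is \emph{not} a competitor to $\eta$ in the weighted program. The case analysis you defer is therefore not a bookkeeping chore; it cannot be completed as stated.

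The paper avoids this issue by a different, simpler move: it does \emph{not} require the counterexample's $\widetilde{T}$ to have accuracy $\alpha$. It takes $\widetilde{T}$ disjoint from $T_0=\mathrm{supp}(\eta)$ and puts as much of $\mathrm{supp}(\gamma)$ as possible inside $\widetilde{T}$. Then $\|\eta\|_{1,\mathrm{w}}=k$ trivially, while $\|\gamma\|_{1,\mathrm{w}}\le |\mathrm{supp}(\gamma)|=a+s-k\le k$ follows directly from the standing hypothesis $s\le k$. No case analysis is needed, and the only place $\alpha$ enters is through $s$ in the construction of $A$. If you want the stronger statement in which the counterexample also respects $|\widetilde{T}\cap T_0|=\alpha\rho k$, you would need a different construction (or additional hypotheses); the paper does not claim this.
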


\begin{Remark}
Theorem \ref{t4} implies that for arbitrarily $\varepsilon>0$, $\delta_{a}+C_{a,b,k}^{\alpha,\omega}\theta_{a,b}<1+\varepsilon$
is not sufficient to guarantee the exact recovery of all $k-$sparse vectors in noiseless case
and the stable recovery of all $k-$sparse vectors in noise case.
\end{Remark}

\begin{Proposition}\label{p1}
Let $s$ be defined as (\ref{g6}) and $d$ be defined as (\ref{g7}).
\begin{description}
  \item[{\rm (1)}] If $\omega=1$, then $s=2k-a, d=k$. The sufficient condition (\ref{g23}) of Theorem \ref{t3} is identical to that of Theorem \ref{t1} and Theorem \ref{t2} with (\ref{g1}), and $D_{0}=C_{0}, D_{1}=C_{1}, D'_{0}=C'_{0}, D'_{1}=C'_{1}$. Moreover, the condition is sharp.
  \item[{\rm (2)}]If $\alpha=\frac{1}{2}$, then $s=2k-a$ and $d=k$. The sufficient condition (\ref{g23}) of Theorem \ref{t3} is identical to that of Theorem \ref{t1} and Theorem \ref{t2} with (\ref{g1}), and $D_{0}=C_{0}, D_{1}=C_{1}, D'_{0}=C'_{0}, D'_{1}=C'_{1}$. Moreover, the condition is sharp.
  \item[{\rm (3)}] Assume $0\leq\omega<1$. If $\alpha>\frac{1}{2}$, then $s<2k-a$ and $d=k$. The sufficient condition (\ref{g23}) in Theorem \ref{t3}
  is weaker than that of  Theorem \ref{t1} and Theorem \ref{t2} with (\ref{g1}), and $D_{0}<C_{0}, D'_{0}<C'_{0}$.
\item[{\rm (4)}]Suppose $0\leq\omega<1$. If $\alpha>\frac{1}{2}$ and $b\leq s$, then $D_{1}<C_{1}$.
\item[{\rm (5)}]Suppose $0\leq\omega<1$. If $\alpha>\frac{1}{2}$ and $s<b\leq 2k-a$, then $D_{1}<C_{1}$
if and only if $1-\delta_{a}-C_{a,b,k}^{\alpha,\omega}\theta_{a,b}<\frac{2k-a-\sqrt{bs}}{\sqrt{a}(\sqrt{b}-\sqrt{s})}\theta_{a,b}$.
\item[{\rm (6)}]Suppose $0\leq\omega<1$. If $\alpha>\frac{1}{2}$ and $b>2k-a$, then $D_{1}<C_{1}$
if and only if
$1-\delta_{a}-C_{a,b,k}^{\alpha,\omega}\theta_{a,b}
<\sqrt{\frac{2k-a}{a}}\theta_{a,b}$.

\end{description}
\end{Proposition}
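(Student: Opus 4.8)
The plan is to reduce the whole statement — which is just a comparison of the constants produced by Theorem~\ref{t3} with those of Theorems~\ref{t1}--\ref{t2} — to two elementary observations and then run a short case analysis on $\omega$, $\alpha$ and $b$. The first observation is that $t\mapsto\max\{t/\sqrt{ab},\sqrt{t/a}\}$ is nondecreasing on $t\ge 0$; since $C_{a,b,k}^{\alpha,\omega}$ is this map at $t=s$ and $C_{a,b,k}$ the same map at $t=2k-a$, the two constants are ordered exactly as $s$ and $2k-a$, so (\ref{g23}) is weaker than (\ref{g1}) precisely when $s<2k-a$. The second observation fixes $s$ and $d$: writing $m:=(1+\rho-2\alpha\rho)k$ and $\xi:=k-a+\omega k+(1-\omega)\sqrt m\,\max\{\sqrt m,\sqrt a\}$ (so that $s=[[\xi]]$), I claim $\xi\le 2k-a$ with equality iff $\omega=1$ or $m=k$, and $d=k$ whenever $m\le k$. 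Indeed, using $a\le k$: if $m\le k$ then $\sqrt m\,\max\{\sqrt m,\sqrt a\}$ equals $m$ when $a\le m$ and $\sqrt{ma}$ when $a>m$, and in both cases it is $\le k$ with equality only if $m=k$ (for the second branch $ma<k^2$ unless $m=a=k$, which is incompatible with $a>m$); substituting into $\xi$ with $0\le\omega\le1$ gives $\xi\le k-a+\omega k+(1-\omega)k=2k-a$, with equality iff $\omega=1$ or $m=k$. Since $m=(1-(2\alpha-1)\rho)k$, for $\rho>0$ one has $m=k\Leftrightarrow\alpha=\tfrac12$ and $m<k\Leftrightarrow\alpha>\tfrac12$, in which last case $d=\max\{k,m\}=k$; and since $2k-a\in\mathbb Z$, $s=[[\xi]]$ equals $2k-a$ when $\omega=1$ or $\alpha=\tfrac12$ and is $\le 2k-a$ (strictly, in the regime of parts (3)--(6)) when $0\le\omega<1$ and $\alpha>\tfrac12$.

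Parts (1)--(3) then follow at once. In (1) and (2), $s=2k-a$ and $d=k$, hence $C_{a,b,k}^{\alpha,\omega}=C_{a,b,k}$ by (\ref{g4})--(\ref{g5}), so (\ref{g23}) is literally (\ref{g1}), and putting $s=2k-a$, $d=k$ into (\ref{g24})--(\ref{g25}) reproduces (\ref{g21})--(\ref{g22}), i.e.\ $D_0=C_0$, $D_1=C_1$, $D_0'=C_0'$, $D_1'=C_1'$; sharpness is inherited from Theorem~\ref{t4} (equivalently from the sharpness of the Cai--Zhang condition recalled after Theorem~\ref{t2}, with which (\ref{g23}) now coincides). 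In (3), $0\le\omega<1$ and $\alpha>\tfrac12$ give $s<2k-a$ and $d=k$, so $C_{a,b,k}^{\alpha,\omega}<C_{a,b,k}$ by monotonicity; thus (\ref{g23}) is strictly weaker than (\ref{g1}), and — working where (\ref{g1}) also holds, so that $C_0,C_0'$ are finite and positive — the constants $D_0,D_0'$ have the same numerators as $C_0,C_0'$ (because $d=k$) but strictly larger positive denominators $1-\delta_a-C_{a,b,k}^{\alpha,\omega}\theta_{a,b}$, whence $D_0<C_0$ and $D_0'<C_0'$.

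For (4)--(6) we are under $0\le\omega<1$, $\alpha>\tfrac12$, so $d=k$, $s<2k-a$, and the additive term $1/\sqrt d=1/\sqrt k$ of $D_1$ matches that of $C_1$; hence $D_1<C_1$ reduces to comparing the leading fractions. The remaining step is to read off, from the size of $b$, the active branch of each maximum: if $b\le s$, both equal the first argument; if $s<b\le 2k-a$, then $C_{a,b,k}^{\alpha,\omega}=\sqrt{s/a}$ while $C_{a,b,k}=(2k-a)/\sqrt{ab}$; if $b>2k-a$, both equal the square-root arguments. Substituting the active branches, cross-multiplying $D_1<C_1$ (legitimate since every denominator is positive under (\ref{g23}) together with (\ref{g1})), and using $\sqrt{as}\cdot\sqrt{s/a}=s$ and $\sqrt{a(2k-a)}\cdot\sqrt{(2k-a)/a}=2k-a$, each case collapses to a linear inequality in $\delta_a,\theta_{a,b}$: in (4) the factor $s$ cancels and $s<2k-a$ gives $D_1<C_1$ with no side condition; in (5) it becomes $(1-\delta_a)\sqrt a(\sqrt b-\sqrt s)<(2k-a-s)\theta_{a,b}$, which, after multiplying the claimed bound through by $\sqrt a(\sqrt b-\sqrt s)>0$, is seen to be the same statement; in (6) it becomes $(1-\delta_a)\sqrt a(\sqrt{2k-a}-\sqrt s)<(2k-a-s)\theta_{a,b}$, and factoring $2k-a-s=(\sqrt{2k-a}-\sqrt s)(\sqrt{2k-a}+\sqrt s)$ and cancelling $\sqrt{2k-a}-\sqrt s>0$ gives exactly $1-\delta_a-\sqrt{s/a}\,\theta_{a,b}<\sqrt{(2k-a)/a}\,\theta_{a,b}$, i.e.\ the asserted equivalence since here $C_{a,b,k}^{\alpha,\omega}=\sqrt{s/a}$.

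The steps are essentially bookkeeping, but two points will need care. First, the passage from the strict real inequality $\xi<2k-a$ to the strict integer inequality $s=[[\xi]]<2k-a$ in parts (3)--(6), together with the trivial exclusion of the degenerate case $\rho=0$, must be made explicit. Second, the main obstacle will be running the ``if and only if'' chains of (5)--(6) through the cross-multiplication while keeping every denominator and every factor $\sqrt b-\sqrt s$, $\sqrt{2k-a}-\sqrt s$ positive; this positivity is exactly why the hypotheses prescribe those precise ranges of $b$, and it is the only place where one has to be attentive rather than mechanical.
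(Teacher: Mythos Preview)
Your proposal is correct and follows essentially the same approach as the paper: the paper's proof also reduces (1)--(3) to checking $s=2k-a$ or $s<2k-a$ and $d=k$, performs the identical three-case split on $b$ versus $s$ and $2k-a$ to compare $C_{a,b,k}^{\alpha,\omega}$ with $C_{a,b,k}$, and for (4)--(6) carries out exactly the algebraic cross-multiplications you describe (your monotonicity observation is a clean repackaging of the paper's case-by-case check, but not a different idea). Your flagged concern about passing from $\xi<2k-a$ to $s=[[\xi]]<2k-a$ is legitimate and is glossed over in the paper's own proof as well, which simply asserts ``it follows immediately that $s<2k-a$''.
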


\section{Proofs}\label{4}
\begin{proof}[Proof of Theorem \ref{t3}]
Firstly, we show the estimate (\ref{g9}). Let $h=\widehat{x}^{l_{2}}-x$,
where $x$ is the original signal and $\widehat{x}^{l_{2}}$ is the minimizer of (\ref{f2}) with (\ref{b1}).
We can express $h$ as $h=\sum\limits_{i=1}^{N}c_{i}u_{i}$, where $\{c_{i}\}_{i=1}^{N}$ are nonnegative and decreasing, i.e.
$c_{1}\geq c_{2} \geq\cdots \geq c_{N}\geq0,$
$\{u_{i}\}_{i=1}^{N}$ are different unit vectors with one entry of $\pm 1$ and other
entries of zeros.
From the following inequality proved by Friedlander $et~ al.$ (see (21) in \cite{FMSY})
\begin{align}\label{g11}
  \|h_{T_{0}^{c}}\|_{1}\leq\omega\|h_{T_{0}}\|_{1}+(1-\omega)\|h_{T_{0}\cup\widetilde{T}\setminus\widetilde{T}_{\alpha}}\|_{1}
  +2\left(\omega\|x_{T_{0}^{c}}\|_{1}+(1-\omega)\|x_{\widetilde{T}^{c}\cap T_{0}^{c}}\|_{1}\right),
\end{align}
we have
\begin{align*}
  \sum\limits_{i=k+1}^{N}c_{i}=\|h_{-\max(k)}\|_{1}\leq\omega\|h_{T_{0}}\|_{1}+(1-\omega)\|h_{T_{0}\cup\widetilde{T}\setminus\widetilde{T}_{\alpha}}\|_{1}
  +2\left(\omega\|x_{T_{0}^{c}}\|_{1}+(1-\omega)\|x_{\widetilde{T}^{c}\cap T_{0}^{c}}\|_{1}\right).
\end{align*}
Noting that $|T_{0}\cup\widetilde{T}\setminus\widetilde{T}_{\alpha}|=(1+\rho-2\alpha\rho)k$, thus
\begin{equation*}
   \|h_{-\max(a)}\|_{\infty}=c_{a+1}\leq \frac{\sum\limits_{i=1}^{a}c_{i}}{a}
   =\frac{\|h_{\max(a)}\|_{1}}{a}\leq\frac{\|h_{\max(a)}\|_{2}}{\sqrt{a}},
\end{equation*}

\begin{align*}
  \|h_{-\max(a)}\|_{1}=&\sum\limits_{i=a+1}^{k}c_{i}+\sum\limits_{i=k+1}^{N}c_{i} \\
\leq& \frac{k-a}{k}\sum\limits_{i=1}^{k}c_{i}+\omega\|h_{T_{0}}\|_{1}+(1-\omega)\|h_{T_{0}\cup\widetilde{T}\setminus\widetilde{T}_{\alpha}}\|_{1}
  +2\left(\omega\|x_{T_{0}^{c}}\|_{1}+(1-\omega)\|x_{\widetilde{T}^{c}\cap T_{0}^{c}}\|_{1}\right)  \\
\leq&  \frac{k-a}{a}\|h_{\max(a)}\|_{1}+\omega\sqrt{k}\|h_{T_{0}}\|_{2}
  +(1-\omega)\sqrt{(1+\rho-2\alpha\rho)k}\|h_{T_{0}\cup\widetilde{T}\setminus\widetilde{T}_{\alpha}}\|_{2}\\
  &+2\left(\omega\|x_{T_{0}^{c}}\|_{1}+(1-\omega)\|x_{\widetilde{T}^{c}\cap T_{0}^{c}}\|_{1}\right)  \\
 \leq&\frac{k-a}{\sqrt{a}}\|h_{\max(a)}\|_{2}+\omega\sqrt{k}\sqrt{\frac{k}{a}}\|h_{\max(a)}\|_{2}  \\
  &+(1-\omega)\sqrt{(1+\rho-2\alpha\rho)k}\cdot\max\left\{\sqrt{\frac{(1+\rho-2\alpha\rho)k}{a}}, 1\right\}\|h_{\max(a)}\|_{2}\\
  &+2\left(\omega\|x_{T_{0}^{c}}\|_{1}+(1-\omega)\|x_{\widetilde{T}^{c}\cap T_{0}^{c}}\|_{1}\right)    \\
  =&\left(k-a+\omega k+(1-\omega)\sqrt{(1+\rho-2\alpha \rho)k}\cdot\max\{\sqrt{(1+\rho-2\alpha \rho)k}, \sqrt{a}\}\right)\frac{\|h_{\max(a)}\|_{2}}{\sqrt{a}} \\
 &+2\left(\omega\|x_{T_{0}^{c}}\|_{1}+(1-\omega)\|x_{\widetilde{T}^{c}\cap T_{0}^{c}}\|_{1}\right)  \\
 \leq&s\frac{\|h_{\max(a)}\|_{2}}{\sqrt{a}}+2\left(\omega\|x_{T_{0}^{c}}\|_{1}+(1-\omega)\|x_{\widetilde{T}^{c}\cap T_{0}^{c}}\|_{1}\right),
  \end{align*}
where $s=\left[\left[k-a+\omega k+(1-\omega)\sqrt{(1+\rho-2\alpha \rho)k}\cdot\max\{\sqrt{(1+\rho-2\alpha \rho)k}, \sqrt{a}\}\right]\right].$
Taking $k_{1}=a, k_{2}=s, \lambda=\frac{\|h_{\max(a)}\|_{2}}{\sqrt{a}}+\frac{2\left(\omega\|x_{T_{0}^{c}}\|_{1}+(1-\omega)\|x_{\widetilde{T}^{c}\cap T_{0}^{c}}\|_{1}\right)}{s}$, from above inequalities and Lemma \ref{l1} , we obtain
\begin{align*}
  |\langle Ah_{\max(a)}, Ah_{-\max(a)}\rangle|\leq&\theta_{a,s}\|h_{\max(a)}\|_{2}\sqrt{s}
  \cdot\left(\frac{\|h_{\max(a)}\|_{2}}{\sqrt{a}}
  +\frac{2\left(\omega\|x_{T_{0}^{c}}\|_{1}+(1-\omega)\|x_{\widetilde{T}^{c}\cap T_{0}^{c}}\|_{1}\right)}{s}\right).
  \end{align*}
Combining the definition of $\delta_{k}$ and the fact that
\begin{align}\label{g14}
  \|Ah\|_{2}= \|A\widehat{x}^{l_{2}}-Ax\|_{2}\leq\|y-A\widehat{x}^{l_{2}}\|_{2}+\|Ax-y\|_{2}\leq 2\varepsilon,
\end{align}
 we have
\begin{align}\label{g15}
 |\langle Ah_{\max(a)}, Ah\rangle|
  &\leq \|Ah_{\max(a)}\|_{2}\|Ah\|_{2} \notag\\
  &\leq \sqrt{1+\delta_{a}}\|h_{\max(a)}\|_{2}\cdot(2\varepsilon).
\end{align}
Hence,
\begin{align*}
  (2\varepsilon)&\sqrt{1+\delta_{a}}\|h_{\max(a)}\|_{2}\geq |\langle Ah_{\max(a)}, Ah\rangle| \\
   & \geq \|Ah_{\max(a)}\|_{2}^{2}-|\langle Ah_{\max(a)}, Ah_{-\max(a)}\rangle|  \\
   &\geq (1-\delta_{a})\|h_{\max(a)}\|_{2}^{2}-\theta_{a,s}\|h_{\max(a)}\|_{2}\sqrt{s}
  \cdot\left(\frac{\|h_{\max(a)}\|_{2}}{\sqrt{a}}
  +\frac{2\left(\omega\|x_{T_{0}^{c}}\|_{1}+(1-\omega)\|x_{\widetilde{T}^{c}\cap T_{0}^{c}}\|_{1}\right)}{s}\right)\\
  &=\left(1-\delta_{a}-\sqrt{\frac{s}{a}}\theta_{a,s}\right)\|h_{\max(a)}\|_{2}^{2}
  -\theta_{a,s}\|h_{\max(a)}\|_{2}\frac{2\left(\omega\|x_{T_{0}^{c}}\|_{1}+(1-\omega)\|x_{\widetilde{T}^{c}\cap T_{0}^{c}}\|_{1}\right)}{\sqrt{s}}.
\end{align*}

It follows from the above inequality that
\begin{align*}
  \|h_{\max(a)}\|_{2}&\leq \frac{\sqrt{1+\delta_{a}}(2\varepsilon)}{1-\delta_{a}-\sqrt{\frac{s}{a}}\theta_{a,s}}      +\frac{\theta_{a,s}}{1-\delta_{a}-\sqrt{\frac{s}{a}}\theta_{a,s}}
  \frac{2\left(\omega\|x_{T_{0}^{c}}\|_{1}+(1-\omega)\|x_{\widetilde{T}^{c}\cap T_{0}^{c}}\|_{1}\right)}{\sqrt{s}}. \\
\end{align*}

Define
\begin{align*}
  d=\left\{
     \begin{array}{cc}
       k, & \omega=1, \\
       \max\{ k, (1+\rho-2\alpha\rho)k\}, & 0\leq \omega<1.
     \end{array}
 \right.
\end{align*}

With (\ref{g11}), it is clear that
\begin{align*}
  \|h_{-\max(d)}\|_{1}\leq \|h_{\max(d)}\|_{1}+2\left(\omega\|x_{T_{0}^{c}}\|_{1}+(1-\omega)\|x_{\widetilde{T}^{c}\cap T_{0}^{c}}\|_{1}\right).
\end{align*}
From Lemma \ref{l2}, we have
\begin{align*}
  \|h_{-\max(d)}\|_{2}\leq \|h_{\max(d)}\|_{2}+\frac{2\left(\omega\|x_{T_{0}^{c}}\|_{1}+(1-\omega)\|x_{\widetilde{T}^{c}\cap T_{0}^{c}}\|_{1}\right)}{\sqrt{d}}.
\end{align*}
 Therefore,
\begin{align*}
\|h\|_{2}&=\sqrt{\|h_{\max(d)}\|_{2}^{2}+\|h_{-\max(d)}\|_{2}^{2}}  \\
&\leq\sqrt{\|h_{\max(d)}\|_{2}^{2}+\left(\|h_{\max(d)}\|_{2}
+\frac{2\left(\omega\|x_{T_{0}^{c}}\|_{1}+(1-\omega)\|x_{\widetilde{T}^{c}\cap T_{0}^{c}}\|_{1}\right)}{\sqrt{d}}\right)^{2}} \\
&\leq\sqrt{2\|h_{\max(d)}\|_{2}^{2}}+\frac{2\left(\omega\|x_{T_{0}^{c}}\|_{1}+(1-\omega)\|x_{\widetilde{T}^{c}\cap T_{0}^{c}}\|_{1}\right)}{\sqrt{d}} \\
&=\sqrt{2\sum\limits_{i=1}^{d}c_{i}^{2}}+\frac{2\left(\omega\|x_{T_{0}^{c}}\|_{1}+(1-\omega)\|x_{\widetilde{T}^{c}\cap T_{0}^{c}}\|_{1}\right)}{\sqrt{d}} \\
&\leq\sqrt{\frac{2d}{a}\sum\limits_{i=1}^{a}c_{i}^{2}}+\frac{2\left(\omega\|x_{T_{0}^{c}}\|_{1}+(1-\omega)\|x_{\widetilde{T}^{c}\cap T_{0}^{c}}\|_{1}\right)}{\sqrt{d}} \\
&=\sqrt{\frac{2d}{a}}\|h_{\max(a)}\|_{2}+\frac{2\left(\omega\|x_{T_{0}^{c}}\|_{1}+(1-\omega)\|x_{\widetilde{T}^{c}\cap T_{0}^{c}}\|_{1}\right)}{\sqrt{d}} \\
&\leq\frac{\sqrt{2(1+\delta_{a})d/a}}{1-\delta_{a}-\sqrt{\frac{s}{a}}\theta_{a,s}}(2\varepsilon)
+\left(\frac{\sqrt{2d/a}\theta_{a,s}}{(1-\delta_{a}-\sqrt{\frac{s}{a}}\theta_{a,s})\sqrt{s}}+\frac{1}{\sqrt{d}}\right)
  \cdot2\left(\omega\|x_{T_{0}^{c}}\|_{1}+(1-\omega)\|x_{\widetilde{T}^{c}\cap T_{0}^{c}}\|_{1}\right).
\end{align*}

Since
\begin{align*}
  \theta_{a,s}=\theta_{a,\frac{s}{\min\{b,s\}}\min\{b,s\}}
  \leq \sqrt{\frac{s}{\min\{b,s\}}}\theta_{a,\min\{b,s\}}
   \leq \max\{\sqrt{\frac{s}{b}},1\}\theta_{a,b}
   = \sqrt{\frac{a}{s}}C_{a,b,k}^{\alpha,\omega}\theta_{a,b},
\end{align*}
where $C_{a,b,k}^{\alpha,\omega}=\max\{\frac{s}{\sqrt{ab}},\sqrt{\frac{s}{a}}\},$  and the first inequality follows from Lemma \ref{l3}.
Consequently,
\begin{align*}
\|h\|_{2}
&\leq\frac{\sqrt{2(1+\delta_{a})d/a}}{1-\delta_{a}-C_{a,b,k}^{\alpha,\omega}\theta_{a,b}}(2\varepsilon)
+\left(\frac{\sqrt{2d}C_{a,b,k}^{\alpha,\omega}\theta_{a,b}}{(1-\delta_{a}-C_{a,b,k}^{\alpha,\omega}\theta_{a,b})s}+\frac{1}{\sqrt{d}}\right)
  \cdot2\left(\omega\|x_{T_{0}^{c}}\|_{1}+(1-\omega)\|x_{\widetilde{T}^{c}\cap T_{0}^{c}}\|_{1}\right)
\end{align*}
So, (\ref{g9}) is obtained.

Next, we can prove (\ref{g10}) going along similar lines to that of (\ref{g9}). To prove(\ref{g10}),
we only need to use the following (\ref{g19}) and (\ref{g20}) instead of (\ref{g14}) and (\ref{g15}), respectively.

\begin{align}\label{g19}
  \|A^{T}Ah\|_{\infty} & = \|A^{T}A(\widehat{x}^{DS}-x)\|_{\infty} \notag\\
   & \leq \|A^{T}(A\widehat{x}^{DS}-y)\|_{\infty}+\|A^{T}(y-Ax)\|_{\infty} \notag\\
   &\leq 2\varepsilon,
\end{align}
\begin{align}\label{g20}
  |\langle Ah_{\max(a)}, Ah\rangle |& =|\langle h_{\max(a)}, ~A^{T}Ah\rangle| \notag\\
   & \leq \|h_{\max(a)}\|_{1}\|A^{T}Ah\|_{\infty} \notag\\
   &\leq \sqrt{a}\|h_{\max(a)}\|_{2}\cdot(2\varepsilon).
\end{align}
This completes the proof of Theorem \ref{t3}.
\end{proof}

\begin{proof}[Proof of Theorem \ref{t4}]
Firstly, let $L=a+s$, and
\begin{align*}
  \xi_{1}&=\frac{1}{\sqrt{L}}(\overbrace{1,\ldots,1}^{L},0,\ldots,0)\in\mathbb{R}^{N}, \quad\mathrm{if}~ L-k>\rho k, \\
  \mathrm{or}\quad
  \xi_{1}&=\frac{1}{\sqrt{L}}(\underbrace{1,\ldots, 1}_{k-\alpha\rho k}, \underbrace{\overbrace{1,\ldots,1}^{L-k},0,\ldots,0}_{\rho k},
  \underbrace{1,\ldots, 1}_{\alpha\rho k}, 0,\ldots,0)\in\mathbb{R}^{N}, \quad\mathrm{if} ~L-k\leq\rho k,
\end{align*}
Due to $\|\xi_{1}\|_{2}=1$, we extend $\xi_{1}$ into an orthonormal basis $\{\xi_{1},\ldots,\xi_{N}\}$ of $\mathbb{R}^{N}$.
Next, we define the linear map $A: \mathbb{R}^{N}\rightarrow\mathbb{R}^{N}$ such that for all $x=\sum\limits_{i=1}^{N}c_{i}\xi_{i}\in\mathbb{R}^{N},$
\begin{align*}
 Ax&=\sqrt{1+\frac{L-s}{L+s}}(x-\langle \xi_{1}, x\rangle \xi_{1})=\sqrt{1+\frac{L-s}{L+s}}\sum\limits_{i=2}^{N}c_{i}\xi_{i}.
 \end{align*}

Then for any $a-$sparse signal $x$, we can easily gain
$$\|Ax\|_{2}^{2} =\left(1+\frac{L-s}{L+s}\right)\left(\|x\|_{2}^{2}-|\langle \xi_{1}, x\rangle|^{2}\right),$$
and
$$|\langle \xi_{1}, x \rangle|^{2}\leq \|x\|_{2}^{2}\cdot\sum\limits_{i\in \mathrm{supp}(x)}|\xi_{1}(i)|^{2}
\leq \|x\|_{2}^{2}\cdot\|\xi_{1,\max(a)}\|_{2}^{2}\leq \frac{a}{L}\|x\|_{2}^{2}=\frac{L-s}{L}\|x\|_{2}^{2}.$$

Hence,
\begin{align*}
      \left(1+\frac{L-s}{L+s}\right)\|x\|_{2}^{2}\geq\|Ax\|_{2}^{2}
      \geq\left(1+\frac{L-s}{L+s}\right)(1-\frac{L-s}{L})\|x\|_{2}^{2}=\left(1-\frac{L-s}{L+s}\right)\|x\|_{2}^{2},
     \end{align*}
which deduces $$\delta_{a}\leq \frac{L-s}{L+s}.$$
Finally, we estimate $\theta_{a,b}.$ For arbitrary $a-$sparse vector $u\in\mathbb{R}^{N}$
and $b-$sparse vector $v\in\mathbb{R}^{N}$ with disjoint supports, we define
$u=\sum_{i=1}^{N}l_{i}\xi_{i}$ and  $v=\sum_{i=1}^{N}d_{i}\xi_{i}$. It follows immediately that
$0=\langle u, v\rangle=\sum_{i=1}^{N}l_{i}d_{i}$.

$\mathrm{(i)}$ When $b\leq s,$  through a simple calculation, it can be concluded that
$$|l_{1}|=|\langle \xi_{1}, u\rangle|\leq\|u\|_{2}\cdot\left(\sum\limits_{i\in \mathrm{supp}(u)}|\xi_{1}(i)|^{2}\right)^{1/2}
\leq \|u\|_{2}\cdot\|\xi_{1,\max(a)}\|_{2}\leq \sqrt{\frac{a}{L}}\|u\|_{2},$$
and
$$|d_{1}|=|\langle \xi_{1}, v\rangle|\leq\|v\|_{2}\cdot\left(\sum\limits_{i\in \mathrm{supp}(v)}|\xi_{1}(i)|^{2}\right)^{1/2}
\leq \|v\|_{2}\cdot\|\xi_{1,\max(b)}\|_{2}\leq \sqrt{\frac{b}{L}}\|v\|_{2}.$$
It then follows that
$$\frac{1}{1+\frac{L-s}{L+s}}|\langle Au, Av\rangle|=|\sum\limits_{i=2}^{N}l_{i}d_{i}|=|-l_{1}d_{1}|\leq \frac{\sqrt{ab}}{L}\|u\|_{2}\|v\|_{2}.$$
Accordingly,
$$\theta_{a,b}\leq (1+\frac{L-s}{L+s})\frac{\sqrt{ab}}{L}.$$
Therefore,
\begin{align*}
  \delta_{a}+C_{a,b,k}^{\alpha,\omega}\theta_{a,b}&\leq \frac{L-s}{L+s}+\max\left\{ \frac{s}{\sqrt{ab}}, \sqrt{\frac{s}{a}}\right\}\cdot(1+\frac{L-s}{L+s})\frac{\sqrt{ab}}{L} \\
  &=\frac{L-s}{L+s}+\frac{s}{\sqrt{ab}}(1+\frac{L-s}{L+s})\frac{\sqrt{ab}}{L} \\
  &=1.
\end{align*}
$\mathrm{(ii)}$ When $b>s,$ without loss of generality, we can suppose that $u$ and $v$
are nonzero. If $u=0$ or $v=0$,  clearly $\langle Au, Av\rangle=0\leq C \|u\|_{2}\|v\|_{2}$  holds for all $C>0$.
We normalize $u$ and $v$ such that $\|u\|_{2}=\|v\|_{2}=1$.
Because $u$ is $a-$sparse and $v$ is $b-$sparse, and $u, v$ have disjoint supports,
we conclude
$$|l_{1}|=|\langle \xi_{1}, u\rangle|\leq \sqrt{\frac{a}{L}}\|u\|_{2}=\sqrt{\frac{a}{L}}=\sqrt{\frac{a}{s+a}},$$
and
\begin{align*}
  \left| d_{1}\pm \sqrt{\frac{a}{s}}l_{1}\right| & =  \left| \left\langle \xi_{1}, v\pm \sqrt{\frac{a}{s}}u\right\rangle\right|
\leq\left\| v\pm \sqrt{\frac{a}{s}}u\right\|_{2} \\
&=\sqrt{ \|v\|_{2}^{2}+\frac{a}{s}\|u\|_{2}^{2}}=\sqrt{\frac{s+a}{s}}.
\end{align*}
In view of $|l_{1}|\leq \sqrt{\frac{a}{a+s}}$ and $1\leq a\leq s$,
\begin{align*}
  \frac{1}{1+\frac{L-s}{L+s}}|\langle Au, Av\rangle|& =|\sum\limits_{i=2}^{N}l_{i}d_{i}|=|-l_{1}d_{1}| \\
   & =\left( \max\left\{ \left| d_{1}+\sqrt{\frac{a}{s}}l_{1}\right|, \left| d_{1}-\sqrt{\frac{a}{s}}l_{1}\right| \right\}
   -\left|\sqrt{\frac{a}{s}}l_{1}\right|  \right)\cdot|l_{1}|  \\
   &\leq |l_{1}|\left(\sqrt{\frac{s+a}{s}}-\sqrt{\frac{a}{s}}|l_{1}| \right)\\
   &=-\sqrt{\frac{a}{s}} \left(|l_{1}|^{2}-\sqrt{\frac{s+a}{a}}|l_{1}| \right)\\
    &=-\sqrt{\frac{a}{s}} \left( |l_{1}|-\frac{1}{2}\sqrt{\frac{s+a}{a}}\right)^{2}+\frac{s+a}{4\sqrt{as}}\\
    &\leq -\sqrt{\frac{a}{s}} \left( \sqrt{\frac{a}{s+a}}-\frac{1}{2}\sqrt{\frac{s+a}{a}}\right)^{2}+\frac{s+a}{4\sqrt{as}}\\
    &=\frac{\sqrt{as}}{s+a}=\frac{\sqrt{as}}{L},
    \end{align*}
which implies
$$\theta_{a,b}\leq (1+\frac{L-s}{L+s})\frac{\sqrt{as}}{L}.$$
Hence,
\begin{align*}
  \delta_{a}+C_{a,b,k}^{\alpha,\omega}\theta_{a,b}&\leq \frac{L-s}{L+s}+\max\left\{ \frac{s}{\sqrt{ab}}, \sqrt{\frac{s}{a}}\right\}\cdot(1+\frac{L-s}{L+s})\frac{\sqrt{as}}{L} \\
  &=\frac{L-s}{L+s}+ \sqrt{\frac{s}{a}}(1+\frac{L-s}{L+s})\frac{\sqrt{as}}{L} \\
  &=1.
\end{align*}
In a word, $\delta_{a}+C_{a,b,k}^{\alpha,\omega}\theta_{a,b}\leq 1$ has been proved.

Next, we define
\begin{align*}
  \eta&=(\overbrace{1,\ldots,1}^{k-\alpha\rho k}, \overbrace{0,\ldots,0}^{\rho k}, \overbrace{1,\ldots,1}^{\alpha\rho k},0,\ldots,0)\in\mathbb{R}^{N}, \\
 \gamma&=(\underbrace{0,\ldots, 0}_{k-\alpha\rho k}, \underbrace{-1,\ldots,-1,}_{\rho k}
  \underbrace{0,\ldots, 0}_{\alpha\rho k},\underbrace{-1,\ldots,-1}_{L-k-\rho k}, 0,\ldots,0)\in\mathbb{R}^{N},  \quad\mathrm{if}~ L-k>\rho k,\\
   \mathrm{or}\quad
   ~\gamma&=(\underbrace{0,\ldots, 0}_{k-\alpha\rho k}, \underbrace{\overbrace{-1,\ldots,-1}^{L-k},0,\ldots,0}_{\rho k},
  \underbrace{0,\ldots, 0}_{\alpha\rho k}, 0,\ldots,0)\in\mathbb{R}^{N}, \quad \mathrm{if} ~L-k\leq\rho k.
  \end{align*}
From $1\leq a\leq s\leq k$ and $L=a+s$, we have $L-k\leq k$. Hence
$\eta$ and $ \gamma$ are $k-$sparse. Moreover, $\|\eta\|_{1,\mathrm{w}}=k$, $\|\gamma\|_{1, \mathrm{w}}\leq L-k\leq k$.
Note that  $\|\gamma\|_{1, \mathrm{w}} \leq \|\eta\|_{1,\mathrm{w}}$ and $\xi_{1}=\frac{1}{\sqrt{L}}(\eta-\gamma)$.
Since $A\xi_{1}=0$, we obtain $A\eta=A\gamma$.

(i) $\|\gamma\|_{1, \mathrm{w}} < \|\eta\|_{1,\mathrm{w}}$.

In the noiseless case $y=A\eta$, if weighted $l_{1}$ minimization
method (\ref{f2}) can exactly recover $\eta$, namely,
$\widehat{\eta}=\eta$. Clearly, $\|\widehat{\eta}\|_{1,
\mathrm{w}}=\|\eta\|_{1, \mathrm{w}}$. It contradicts that
$\|\gamma\|_{1, \mathrm{w}} < \|\eta\|_{1,\mathrm{w}}$.

In the noise case $y=A\eta+z$, suppose weighted $l_{1}$ minimization method (\ref{f2})
can stable recover $\eta$ with constraint $\mathcal{B}$,
i.e., $\lim\limits_{z\rightarrow 0}\widehat{\eta}=\eta$.
Due to $y-A(\widehat{\eta}-\eta+\gamma)=y-A\widehat{\eta} \in \mathcal{B}$ and the definition of
$\widehat{\eta}$, it follows immediately that $\|\widehat{\eta}\|_{1, \mathrm{w}}\leq \|\widehat{\eta}-\eta+\gamma\|_{1, \mathrm{w}}$. Thus, we have $\|\eta\|_{1, \mathrm{w}}\leq \|\gamma\|_{1, \mathrm{w}}$ as $z\rightarrow 0$. It contradicts that $\|\gamma\|_{1, \mathrm{w}} < \|\eta\|_{1,\mathrm{w}}$.

(ii) $\|\gamma\|_{1, \mathrm{w}} = \|\eta\|_{1,\mathrm{w}}$. The
weighted $l_{1}$ method (\ref{f2}) does not distinguish $k-$sparse
signals $\eta$ and $\gamma$ based $y$ and $A$.

Hence the weighted $l_{1}$ method (\ref{f2}) does not exactly and
stably recover the $k-$sparse signal $\eta$ based on $A$ and $y$.
Combining Theorem \ref{t3}, we have
$\delta_{a}+C_{a,b,k}^{\alpha,\omega}\theta_{a,b}=1$. This completes
the proof of the theorem.

\end{proof}

\begin{proof}[Proof of Proposition \ref{p1}]
For (1) and (2), when $\omega=1$ or $\alpha=\frac{1}{2}$, by simple
calculation, we have $s=2k-a, d=k$. Then, it is easy to imply (1)
and (2) by comparing Theorem \ref{t3} with Theorem \ref{t1} and
Theorem \ref{t2}.

(3) Let $0\leq\omega<1$. If $\alpha>\frac{1}{2}$, by means of the definition of $s$ in (\ref{g6})
and $d$ in (\ref{g7}), it follows immediately that $s<2k-a, d=k$.

When $b\leq s$, $C_{a,b,k}^{\alpha,\omega}=\frac{s}{\sqrt{ab}}<\frac{2k-a}{\sqrt{ab}}=C_{a,b,k}$.

When $s<b\leq 2k-a$, $C_{a,b,k}^{\alpha,\omega}=\sqrt{\frac{s}{a}}<\frac{2k-a}{\sqrt{ab}}=C_{a,b,k}$.

When $b\geq 2k-a$, $C_{a,b,k}^{\alpha,\omega}=\sqrt{\frac{s}{a}}<\sqrt{\frac{2k-a}{a}}=C_{a,b,k}$.

For any positive integers $a$ and $ b$ with $1\leq a\leq k$, in short, we obtain
$C_{a,b,k}^{\alpha,\omega}<C_{a,b,k}$,
which implies $\delta_{a}+C_{a,b,k}^{\alpha,\omega}\theta_{a,b}<\delta_{a}+C_{a,b,k}\theta_{a,b}.$
Thus， the condition $\delta_{a}+C_{a,b,k}^{\alpha,\omega}\theta_{a,b}<1$ in (\ref{g23}) is weaker than
$\delta_{a}+C_{a,b,k}\theta_{a,b}<1$ in (\ref{g1}) and $D_{0}=\frac{\sqrt{2(1+\delta_{a})k/a}}{1-\delta_{a}-C_{a,b,k}^{\alpha,\omega}\theta_{a,b}}
<\frac{\sqrt{2(1+\delta_{a})k/a}}{1-\delta_{a}-C_{a,b,k}\theta_{a,b}}=C_{0}$, $D'_{0}=\frac{\sqrt{2k}}{1-\delta_{a}-C_{a,b,k}^{\alpha,\omega}\theta_{a,b}}<\frac{\sqrt{2k}}{1-\delta_{a}-C_{a,b,k}\theta_{a,b}}=C'_{0}$, which implies (3).

(4) Assume $0\leq\omega<1$. If $\alpha>\frac{1}{2}$ and $b\leq s$,  we have $d=k$ and $\delta_{a}+C_{a,b,k}^{\alpha,\omega}\theta_{a,b}<\delta_{a}+C_{a,b,k}\theta_{a,b}.$
Combining the definition of $C_{1}$ and $D_{1}$, obviously,  $D_{1}=\frac{\sqrt{2k}\frac{1}{\sqrt{ab}}\theta_{a,b}}
{1-\delta_{a}-C_{a,b,k}^{\alpha,\omega}\theta_{a,b}}+\frac{1}{\sqrt{k}}
<\frac{\sqrt{2k}\frac{1}{\sqrt{ab}}\theta_{a,b}}
{1-\delta_{a}-C_{a,b,k}\theta_{a,b}}+\frac{1}{\sqrt{k}}=C_{1}.$

(5) Since $\alpha>\frac{1}{2}$ and $s<b\leq 2k-a$,
$C_{a,b,k}^{\alpha,\omega}=\sqrt{\frac{s}{a}},
C_{a,b,k}=\frac{2k-a}{\sqrt{ab}}$. Thus, to prove $D_{1}<C_{1}$, we
just need to prove $\frac{\sqrt{2k}\sqrt{\frac{1}{a}}\theta_{a,b}}
{(1-\delta_{a}-\sqrt{\frac{s}{a}}\theta_{a,b})\sqrt{s}}+\frac{1}{\sqrt{k}}
<\frac{\sqrt{2k}\sqrt{\frac{1}{ab}}\theta_{a,b}}
{1-\delta_{a}-\frac{2k-a}{\sqrt{ab}}\theta_{a,b}}+\frac{1}{\sqrt{k}}$.
It is equal to prove that $1-\delta_{a}
<\frac{2k-a-s}{\sqrt{a}(\sqrt{b}-\sqrt{s})}\theta_{a,b}$, namely,
$1-\delta_{a}-C_{a,b,k}^{\alpha,\omega}\theta_{a,b}
<\frac{2k-a-\sqrt{bs}}{\sqrt{a}(\sqrt{b}-\sqrt{s})}\theta_{a,b}$.

(6) Due to $\alpha>\frac{1}{2}$ and $b>2k-a$, we have
$C_{a,b,k}^{\alpha,\omega}=\sqrt{\frac{s}{a}}$ and
$C_{a,b,k}=\sqrt{\frac{2k-a}{a}}$. To show $D_{1}<C_{1}$ is equal to
prove that $\frac{\sqrt{2k}\sqrt{\frac{1}{a}}\theta_{a,b}}
{(1-\delta_{a}-\sqrt{\frac{s}{a}}\theta_{a,b})\sqrt{s}}+\frac{1}{\sqrt{k}}
<\frac{\sqrt{2k}\sqrt{\frac{1}{a}}\theta_{a,b}}
{(1-\delta_{a}-\sqrt{\frac{2k-a}{a}}\theta_{a,b})\sqrt{2k-a}}+\frac{1}{\sqrt{k}}$.
It suffices to prove
$1-\delta_{a}<\frac{\sqrt{2k-a}+\sqrt{s}}{\sqrt{a}}\theta_{a,b}$,
i.e., $1-\delta_{a}-C_{a,b,k}^{\alpha,\omega}\theta_{a,b}
<\sqrt{\frac{2k-a}{a}}\theta_{a,b}$.
\end{proof}

\section*{Acknowledgments}
This work was supported by the NSF of China (Nos.11271050, 11371183)
and Beijing Center for Mathematics and Information Interdisciplinary
Sciences (BCMIIS).

\end{document}